\def\shrug{\texttt{\raisebox{0.75em}{\char`\_}\char`\\\char`\_\kern-0.5ex(\kern-0.25ex\raisebox{0.25ex}{\rotatebox{45}{\raisebox{-.75ex}"\kern-1.5ex\rotatebox{-90})}}\kern-0.5ex)\kern-0.5ex\char`\_/\raisebox{0.75em}{\char`\_}}}
\theoremstyle{plain}
\newtheorem{theorem}{Theorem}[section]
\newtheorem{lemma}[theorem]{Lemma}
\newtheorem{proposition}[theorem]{Proposition}
\theoremstyle{definition}
\newtheorem{remark}[theorem]{Remark}
\numberwithin{equation}{section}
\newcommand{\N}{\mathbb{N}}
\newcommand{\ind}[1]{\mathbf{1}_{\left\{#1\right\}}}
\newcommand{\ceil}[1]{{\left\lceil #1 \right\rceil}}
\renewcommand{\bar}[1]{\overline{#1}}
\renewcommand{\phi}{\varphi}
\renewcommand{\epsilon}{\varepsilon}
\newcommand{\E}{\mathbf{E}}
\renewcommand{\P}{\mathbf{P}}
\newcommand{\dd}{\mathrm{d}}
\renewcommand{\rho}{\varrho}
\renewcommand{\epsilon}{\varepsilon}
\title{A tractable non-adaptative group testing method for non-binary measurements}
\author{Emilien Joly\thanks{CIMAT, Guanajuato, Mexico, \texttt{emilien.joly@cimat.mx}.} \and Bastien Mallein\thanks{Université Sorbonne Paris Nord, LAGA, UMR 7539, F-93430, Villetaneuse, France. Member of the MODCOV19 plateform and the GROUPOOL initiative, \texttt{mallein@math.univ-paris13.fr}.}}
\date{\today}
\begin{document}

\maketitle

\begin{abstract}
The original problem of group testing consists in the identification of defective items in a collection, by applying tests on groups of items that detect the presence of at least one defective item in the group. The aim is then to identify all defective items of the collection with as few tests as possible. This problem is relevant in several fields, among which biology and computer sciences. In the present article we consider that the tests applied to groups of items returns a \emph{load}, measuring how defective the most defective item of the group is. In this setting, we propose a simple non-adaptative algorithm allowing the detection of all defective items of the collection. This method improves on classical group testing algorithms using only the binary response of the test.

Group testing recently gained attraction as a potential tool to solve a shortage of COVID-19 test kits, in particular for RT-qPCR. These tests return the viral load of the sample and the viral load varies greatly among individuals. Therefore our model presents some of the key features of this problem. We aim at using the extra piece of information that represents the viral load to construct a one-stage pool testing algorithm on this idealized version. We show that under the right conditions, the total number of tests needed to detect contaminated samples can be drastically diminished.
\end{abstract}

\section{Introduction}

The group testing problem consists in identifying a subset of defective items among a larger set by using tests on pools of items answering the question ``Does this pool contains at least one defective item?''. This problem has a long history, and appeared several times in different fields of medical biology \cite{Dor43,Tho62,ThM06,FHG12} and computer sciences \cite{MTT08,IKW18,AJS19}. It has also been the subject of an important mathematical literature, which studied optimal algorithms for the detection of defective items with minimal use of tests, which are considered a limiting resource. Those algorithms can be divided into two main categories:
\begin{description}
  \item[adaptive testing:] in which the choice of a pool is influenced by the previous results of tests applied to the group;
  \item[non-adaptive testing:] in which the choice of a pool does not depend on the results of previous tests.
\end{description}
The aim of a pool testing algorithm is to assess, as precisely as possible, the status (defective or not) of each item, through the tests made on pools of items, while using as few tests as possible.

It should be clear that the more permissive adaptive testing option allows for more flexibility, a more parsimonious use of tests can thus be archived in this setting. At one extreme, a search tree can be used to detect defective items with a maximal economy of tests \cite{CJBJ}. In contrast, non-adaptive testing allows the possibility to massively parallelize the procedure. As all pools can be constructed before any result is known, all tests can be performed simultaneously, which can decrease the time needed to obtain the result. Moreover, in the context of biological testing, non-adaptive schemes decrease the risk of contamination or of decay of samples during their treatment.

It might be noted that several types of adaptive testing allow some level of parallelizing. For example, two- or three-stages algorithms can be considered. In this situation, a first set of pools is constructed without prior information. Using the result of testing on these pools, a second set of pools is constructed. With the tests made on this second set of pools, the status of each item is assessed in a two-stage algorithm, or a third set of pools is constructed and tested in a third stage algorithm, before assessing the status of the items.

One of the first pool testing algorithms to be describe was introduced by Dorfman \cite{Dor43}, as a method to detect syphilis in recruited US soldiers. This algorithm is the following: samples taken from individuals are pooled together in a group, which is then tested for syphilis. If the pool turns negative, all individuals are declared non-contaminated, while if the pool turns positive, then each individual of the pool is tested. Note that this is a two-stage algorithm, which we refer to as Dorfman's algorithm.

Several adaptive and non-adaptive pool testing algorithms have been described over the years, such as matrix testing \cite{CCK99}, smart testing \cite{ThM06}, and testing based on risk estimation for items \cite{ABB19,BBC20}. We refer to \cite{AJS19} for a recent survey on this topic. To compare these algorithms, it is necessary to specify more precisely the context in which they are used, such as the relative number of defective and non-defective items, the authorized false positive and false negative rates, etc.

\paragraph{Prevalence and efficiency of pooling procedures.}
As stated above, the objective of pool testing is the reduction of the number of tests used on a population of $N$ items in order to identify the defective ones. If an algorithm uses a total of $T$ tests, we measure its resource-based efficiency by the quantity
\begin{equation*}
  E=\frac{T}{N}.
\end{equation*}
This ratio measures the average number of tests used per item in this pool testing algorithm to detect the defective ones. Therefore, the lower this ratio is, the more parsimonious the algorithm.

Observe that any reasonable algorithm of pool testing should verify $E<1$, as otherwise the testing of any item separately represents a more efficient use of resources. In the present article, we assume that a known proportion $p$ of items is defective. It is worth noting that in that situation, a lower bound on the efficiency of a reasonable non-adaptive pool testing algorithm is $E(p) \geq p$. Indeed, there are approximately $pN$ defective items among $N$, so if one makes less than $pN$ tests, there is no possibility to detect the defective items if all pools contains at least one defective. One is interested in the optimal dependency of $E$ in the parameter $p$.

The optimal efficiency of the Dorfman algorithm previously described is obtained by choosing the size of the pool depending on the value of $p$ in such a way that it is minimal. It can be computed as follows
\begin{equation}
  \label{eqn:optDor}
  E^D(p) = \min_{n \in \N} \frac{1 + n ( 1 - (1-p)^n )}{n} \sim 2 p^{1/2} \quad \text{ as } p \to 0.
\end{equation}
Indeed, if one creates pools of $n$ items, one test is required \emph{for the pool} to detect if defective items are present or not, and if the pool is positive (which happens with probability $1 - (1-p)^n$), an additional one is needed \emph{per item}. The equivalent is obtained by choosing $n \approx p^{-1/2}$ as $p \to 0$.

Mézard et al. \cite{MTT08} constructed asymptotically optimal non-adaptive and two-steps pool testing algorithms, which detects asymptotically all defective items, while keeping an efficiency of
\begin{equation}
  \label{eqn:optMez}
  E^*(p) \sim C_* p |\log p| \text{ as } p \to 0,
\end{equation}
for some $C_* > 0$. This algorithm is based on the construction of random pools of size $n \approx c p^{-1}$ of items, such that each item belongs to $L \approx C |\log p|$ pools. An item is declared non-defective if it belongs to at least one pool tested negative, is declared defective if it belongs to at least one positive pool with all the other items being declared non-defective, and is declared ambiguous otherwise. In that situation, depending of the value of $c,C$, either with high probability each defective item will belong to at least one pool of non-defective items, and thus be identified as non-defective, or with high probability, the number of ambiguous items after the first stage is small enough that they can be tested individually.

\paragraph{Pooling in the context of the COVID-19 epidemics.}
In the context of COVID-19, pool testing has been massively proposed and implemented as a method to diminish the marginal cost of a test as well as to answer local shortages of test kits, see for example \cite{BBC20, Ghosh2020, Shani-Narkiss2020, Ben-Ami2020, Hogan2020, SinnottArmstrong2020, Lipsitch2020, GG, Mutesa2020, Torres2020, Lohse2020} among many others. The necessity of early detection of contaminated individuals has been underlined many times, in particular due to the large number of presymptomatic, asymptomatic and mildly symptomatic individuals that remain contagious and can carry the diseases to vulnerable people. As a result, the demand for effective and quick testing has skyrocketed, with the offer being limited by the number of test kits and trained medical professionals for the sampling. The question of optimization of pool testing thus has practical consequences, as improving on the efficiency of a testing algorithm can increase the number of individuals that can be tested with the same number of kits.

A typical test used for the detection of contamination to SARS-COV-2 is the RT-qPCR test (or PCR test for short), the \emph{reverse-transcriptase quantitative polymerase chain reaction}. This test allows the measurement of the number of RNA segments typical of the virus that are present in a given sample (usually, three different RNA segments are tested simultaneously to improve on the measure), which is related to the viral load of the sample. As the name suggest, the measure is quantitative, thus returns more than binary response (which would be akin to a defective/not defective result in the classical pooling literature). As such, it seems that this additional piece of information could be used to improve on the existing group testing strategies to reduce the the number of tests needed for detection.

However, let us underline a couple of important caveats. First, the quantity measured by the PCR test is related to the logarithm of the viral load carried in the sample, rather than the viral load itself, with some noise on the measure \cite{BMR}. Therefore the exact viral load is not known, but rather its order of magnitude. Secondly, the viral load in defective items spans a large range, of several orders of magnitude \cite{Jones2020}. Therefore, if two defective items with viral loads $c_1$ and $c_2$ are tested in the same pool, the result of the measure will be
\begin{equation}
  \label{eqn:maxStable}
  \log (c_1 + c_2) \approx \max(\log c_1, \log c_2),
\end{equation}
as $c_1$ and $c_2$ will typically be of different orders of magnitude.

The aim of this article is to propose and study an algorithm that uses the viral load of an item to improve its efficiency. We construct this algorithm on an idealized version of the situation described above. We discuss in more details in Section~\ref{sec:covid} the adaptation of the algorithm to the COVID scenario, pointing some of its limitations.

\paragraph{Defective items with load.}
We consider in this article some theoretical aspects of pooling strategies that can be employed for the detection of defective items with load, in order to adapt to the PCR testing scenario previously described. We assume here that each defective item $u$ has a positive value $x_u$ attached that we call its load. A non-defective item will have a load of $0$. The test of a pool $A$ of items has the effect of measuring the value $\max_{u \in A} x_u$, i.e. the largest load among all items in the set $A$.

Observe that if the load of items belongs to $\{0,1\}$, then we are in the settings of the classical pool testing, and a test only detects the presence of at least a contaminated item. However, if this load can takes more values, we show that the results of several tests can be crossed to extract additional information on the items. The load $x_u$ can be thought of as the logarithm of the viral load of an individual in PCR settings, and the choice of measuring the maximal load of a set comes from \eqref{eqn:maxStable}.

We denote by $p$ the \emph{prevalence} of defective items (i.e. the proportion of defective items in the set to be tested). We assume here that $p$ is known (or at least adequately estimated), so it can be used to choose the size and number of pools to be made\footnote{Even if the true proportion of contaminated samples is not known, an upper bound for our method to work successfully even if not at full capacity. Note that we only take into account here the proportion of defective samples currently being tested and not the more complicate question of estimating the time-dependent prevalence of defectiveness in the population.}. The load associated to each item can then be written as $x_u = \xi_u Z_u$, where $\xi_u$ is a Bernoulli random variable with parameter $p$ representing the fact that item $u$ is defective or not, and $Z_u$ is an independent $[0,1]$-valued random variable. In this article we will consider $Z_u$ uniformly distributed either on $[0,1]$ or on $\{1/K,2/K,\ldots, 1\}$ for some $K \in \N$. Note that we assume that all defective items have a positive load, but in real-world examples, there are limitations on the accuracy of the detection and some defective items would have a load of $0$. We do not try to measure these false negative as they are present no matter the testing method used.\footnote{To take into account false positives due to the limits of the detection method, one could choose instead to consider $Z_u$ uniform on $\{0,1/K,2/K,\ldots, 1\}$, with $\xi_u=1$ and $Z_u=0$ corresponding to undetectable defectives.}

The quantity $K$ described above can be interpreted as the level of precision of the measure. The larger $K$ is, the easier it is to distinguish the level of two defective items with similar loads. As a result, the efficiency attained by our algorithm will decay as $K$ increases, to attain optimal efficiency when $K = \infty$, which corresponds to $Z_u$ uniformly distributed on $[0,1]$.

The model of group testing with load allow us to explore the information carried by the maximal value of a set of items. It interpolates with the classical group testing model when $K = 1$, and its limit as $K \to \infty$ is a universal problem, in the sense that all atomless distribution for $x_u$ would create the same combinatorial problem. The case \(K>2\) that generalizes the zero-one binary information corresponding to the healthy-defective alternative is already present in \cite{EM16} where the load of a pool is supposed to take the form of the sum of the load of each individual. The closest case to our study is the one of \cite[Chap 11.3]{DH00} and \cite{HX87} for \(K=2\) with the limit that only one defective (of load 1) and one mediocre (of load $1/2$) are allowed in the sample. In essence, the linear case considered in \cite{EM16} and some generalizations described in \cite[p119]{AJS19} are simpler to study than the multilevel loads combined with the (non-linear) maximum load in the spirit of \eqref{eqn:maxStable}. Indeed, a lot of the information is lost in only considering the maximum so that the small loads are more likely to be hard to detect. See the results in Section \ref{sec:computationFNR-FPR} for precise explanations of this fact.

\paragraph{Organization of the paper.}
We propose here a very simple one-step (non-adaptive) algorithm for the detection of defectives, sometimes under the assumption that the same sample cannot be placed in more than $L$ pools. This algorithm is asymptotically efficient as $p \to 0$ while remaining simple to implement and to evaluate. We describe in the next section the general form of the algorithm we study. We then show how to optimize this algorithm assuming that each sample can only be part of a finite number of pools in Section~\ref{sec:petitL}, and optimal efficiency that can be obtained by this algorithm in Section~\ref{sec:Lmax}. We then provide some numerical simulations to compare these asymptotic results to their finite value counterpart in Section~\ref{sec:simulations}.

\section{The Grid Pool Testing algorithm}

In this work, we focus on a simple one-step non-adaptive algorithm. In this algorithm, items are organized on a grid, and the pools are made of the lines, columns and the diagonals of different slopes of this grid. The algorithm mainly focus on reconstructing the status of items from the measures made on these diagonals. The parameters of the algorithms to optimize are the size of the grid (representing the number of items in each pool) and the number of diagonals slopes to consider (representing the number of pools each item belongs to).

\paragraph{Defining the grid.}
Before describing the algorithm in more details, we introduce some notation. We assume the number of items to test to be sufficiently large that it is possible to divide them into batches of $n^2$ items. We describe the algorithm on a given batch.

The items are dispatched on a grid $n \times n$, with each item being identified by its position $(i,j) \in \{1, \ldots, n\}^2$. We write $\xi_{i,j} = 1$ if $(i,j)$ is defective and $\xi_{i,j} = 0$ otherwise. Moreover we denote by $X_{i,j}$ the load of the item (which is $0$ if the item is non-defective, or a number in $(0,1]$ otherwise).
With the modelling of the previous section, we note that $(\xi_{i,j}, 1 \leq i, j \leq n)$ are i.i.d. $\mathcal{B}(p)$ random variables, with $p$ the proportion of defective. Conditionally on $\xi$, $(X_{i,j}, 1 \leq i,j \leq n)$ are independent random variables, with $X_{i,j}=0$ if $\xi_{i,j} = 0$ and $X_{i,j}$ uniformly distributed on $(0,1]$ or on $\{1/K,2/K\ldots, 1\}$, depending on the context.

\paragraph{Defining the pools.}
The pools used can loosely be described as the diagonals of the grid. More precisely, we introduce the following sets of $n$ items to construct the pools of the algorithm:
\begin{itemize}
  \item the lines $L_i = \{(i,k), 1 \leq k \leq n\}$, for $1\leq i \leq n$.
  \item the columns $C_j = \{(k,j), 1 \leq k \leq n\}$, for $1\leq j \leq n$.
  \item the diagonals with various slopes $D^{a}_b = \{(k,ak+b \text{ mod}(n)), 1 \leq k \leq n\}$ for $1\leq b \leq n$, where $a\in \{1,\dots,n-1\}$.
\end{itemize}
In an algorithm constructed such that each item is part of $L$ pools, the pools will be taken as families of lines, columns and diagonals with slopes smaller than $L-2$.
In the rest of the article we will assume this family of pools will form a $N(n^2,n,L)$ multipool, in the terminology of \cite{Tau20}. In other words, we need our pools to satisfy the following three properties:
\begin{enumerate}
  \item each pool contains exactly $n$ items;
  \item each item belongs to exactly $L$ pools;
  \item two items $(i,j)$ and $(k,l)$ share at most one pool in common.
\end{enumerate}
While the first two properties are straightforward from the definition, the third one is not, and only holds under some assumptions on $n$ and $L$.

\begin{lemma}
\label{lem:nombre}
The family $\{L_k, C_k, D^a_k, 1 \leq k \leq n, a \leq L-2\}$ is a $N(n^2,n,L)$ multipool if and only if $L-2$ is smaller than the smallest prime divisor of $n$.
\end{lemma}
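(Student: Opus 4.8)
The plan is to verify the three defining properties of an $N(n^2,n,L)$ multipool for the family $\{L_k, C_k, D^a_k\}$, noting that properties (1) and (2) are immediate: each line, column, and diagonal $D^a_b$ contains exactly $n$ items by construction, and each item $(i,j)$ lies in exactly one line, one column, and one diagonal of each slope $a \in \{1,\dots,L-2\}$, giving $L$ pools total. Since property (3) is the only nontrivial condition, I would reduce the entire statement to an intersection count: I need to show that any two distinct pools in the family intersect in at most one item precisely when $L-2$ is strictly less than the smallest prime divisor of $n$.

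First I would analyze intersections case by case. Two lines $L_i, L_{i'}$ (or two columns) are disjoint, and a line and column meet in exactly one point, so those cases are automatic. A line $L_i$ and a diagonal $D^a_b$ meet where the second coordinate equals $ak+b \bmod n$ with $k=i$ fixed, giving a single point; similarly for columns. The crux is the intersection of two diagonals $D^a_b$ and $D^{a'}_{b'}$. A common item $(k,\ell)$ must satisfy $\ell \equiv ak+b \equiv a'k+b' \pmod n$, hence $(a-a')k \equiv b'-b \pmod n$. I would then count the number of solutions $k \in \{1,\dots,n\}$ to this linear congruence.

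The heart of the argument is the standard fact that the congruence $(a-a')k \equiv c \pmod n$ has exactly $\gcd(a-a', n)$ solutions when that gcd divides $c$, and none otherwise. Thus two distinct diagonals share at most one item if and only if $\gcd(a-a',n)=1$ for every admissible pair of distinct slopes $a,a' \in \{1,\dots,L-2\}$. Here the difference $a-a'$ ranges over the nonzero values in $\{-(L-3),\dots,L-3\}$, so I would show that $\gcd(d,n)=1$ for all $1 \le |d| \le L-3$. This coprimality holds for every such $d$ exactly when $n$ has no prime divisor in the range $\{2,\dots,L-3\}$... more precisely, it is equivalent to requiring that the smallest prime divisor of $n$ exceeds $L-3$, i.e. $L-2$ is smaller than the smallest prime factor of $n$. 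I expect the main obstacle to be the careful bookkeeping in this equivalence: I must confirm that the extreme slope difference attaining the value $L-3$ genuinely forces the stated bound (so the condition is necessary, not merely sufficient), which I would do by exhibiting that if some prime $q \le L-2$ divides $n$, then choosing two slopes differing by $q$ produces two diagonals sharing $q>1$ common items, violating property (3). Combining the sufficiency and necessity directions then yields the claimed if-and-only-if.
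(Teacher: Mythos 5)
Your reduction to property (3) and your treatment of the line--line, column--column, line--column and line--diagonal cases are fine, but the claim that a \emph{column} meets a diagonal in a single point (``similarly for columns'') is false, and this is exactly where the lemma's threshold comes from. A line $L_i$ fixes the first coordinate, which is the parameter of the diagonal, so the intersection is the single point $(i, ai+b \bmod n)$; a column $C_j$, however, fixes the \emph{second} coordinate, so its intersection with $D^a_b$ is the solution set of $ak \equiv j-b \pmod n$, which has $\gcd(a,n)$ elements when $\gcd(a,n)$ divides $j-b$ and is empty otherwise. Concretely, take $n=6$, $L=4$: the slopes are $\{1,2\}$, the only diagonal--diagonal difference is $1$, so your criterion declares the family a multipool, yet the column $C_2$ and the diagonal $D^2_6=\{(k,2k \bmod 6),\ 1\le k\le 6\}$ share the two items $(1,2)$ and $(4,2)$. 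This omission is what produces the off-by-one slippage at the end of your argument: diagonal--diagonal intersections alone only involve differences $|a-a'|\le L-3$, giving the condition ``smallest prime divisor of $n$ exceeds $L-3$,'' which is \emph{not} equivalent to the stated ``$L-2$ smaller than the smallest prime divisor'' (they disagree precisely when the smallest prime divisor equals $L-2$), so the ``i.e.'' in your last paragraph is incorrect. Your necessity patch has the same flaw: if the prime $q$ dividing $n$ equals $L-2$, no two slopes in $\{1,\dots,L-2\}$ differ by $q$, so the violating pair of diagonals you propose to exhibit does not exist.

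The paper closes this gap with a single device: treat each column as a diagonal of slope $0$. Then every problematic intersection is a diagonal--diagonal one with slope difference ranging over all of $\{1,\dots,L-2\}$, the extreme value $L-2$ being attained by a column against the steepest diagonal. The linear-congruence count you invoke (a unique solution iff $\gcd(a-a',n)=1$) then yields the stated threshold in both directions: sufficiency because all differences up to $L-2$ are coprime to $n$ when its smallest prime factor exceeds $L-2$, and necessity because a prime $q\le L-2$ dividing $n$ gives a column and a diagonal of slope $q$ (or two genuine diagonals when $q\le L-3$) meeting in $q>1$ points. With that correction, the rest of your argument goes through and coincides with the paper's proof.
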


\begin{proof}
We first note that two line never cross, and that a line crosses with a column or a diagonal at exactly one point. Therefore, to verify that $\{L_k, C_k, D^a_k, 1 \leq k \leq n, a \leq L-2\}$ is a multipool, it is enough to check that no too diagonal cross at more than one place (treating columns as diagonals of line $0$).

Observe that for $a \neq b$, two diagonals $D^a_k$ and $D^b_\ell$ cross at a point $(i,j)$ such that $k + a i \equiv \ell + b i \text{ mod } n$, i.e. such that $(b-a) i \equiv \ell - k \text{ mod } n$. By the fundamental theorem of algebra, there exists a unique $i \in [1,n]$ satisfying this property if and only if $(b-a)$ is prime with $n$. As $|b-a| \leq L-2$ is smaller than the smallest prime factor of $n$, we deduce this is indeed the case, proving that any two pools cross at either $0$ (if they have the same slope) or $1$ point.
\end{proof}

\begin{remark}
More generally we could prove that selecting families of lines, columns and diagonals in the $n \times n$ grid, it is possible to create a $N(n^2,n,L)$ multipool if and only if $L-2$ is smaller than the smallest prime divisor of $n$.
\end{remark}

In the rest of the article, we enumerate the pools as the family $\{P_j, j \leq nL\}$, with $P_1,\ldots P_n$ corresponding to the lines, $P_{n+1}, \ldots P_{2n}$ to the columns and the rest to the diagonals, in the increasing order of their slope. For each $\ell \leq nL$, the effect of probing the pool $P_\ell$ corresponds to the action of discovering the value $V_\ell := \max_{(i,j) \in P_\ell} X_{i,j}$, the largest load among all defective items belonging to the pool. Finally, for convenience, we denote \(\mathcal{P}_{i,j}\) the set of pools associated to the item \((i,j)\),
\begin{equation*}
  \mathcal{P}_{i,j} = \{\ell : (i,j)\in P_\ell\}.
\end{equation*}

\paragraph{Computation of the positives.}
The final step of the algorithm consists in a reconstruction of the load of each item via the information contained in the family $\{V_\ell, \ell \leq nL \}$. We observe immediately that if $V_\ell = 0$, then all items in the pool are non-defective, and if $V_\ell = x \neq 0$, then there exists at least one item in the pool with load equal to $x$.

To reconstruct the load of each item, we employ the following procedure.
\begin{enumerate}
  \item For every item \((i,j)\), let \(V_{i,j}=\min_{\ell: (i,j)\in P_{\ell}} V_{\ell}\).
  \item If \(V_{i,j} =0\), the item \((i,j)\) is declared negative.
  \item Otherwise, we count the number of apparitions of the value \(V_{i,j}\) inside of the pools containing \((i,j)\) : \(I_{i,j}=|\{\ell: (i,j)\in P_{\ell}\text{ and }V_{\ell}=V_{i,j}\}|\).
  \begin{enumerate}
    \item If \(I_{i,j}\geq 2\), meaning that at least two tests containing item \((i,j)\) measured it with the same value, the item \((i,j)\) is declared positive.
    \item Otherwise, the item \((i,j)\) is declared negative.
  \end{enumerate}
\end{enumerate}

Here is the reason behind this definition. By the assumptions we made on the test, $V_{i,j}$ is an upper bound for the load $X_{i,j}$ of the item. In particular, if $V_{i,j} = 0$, we label the item as non-defective. However, if $V_{i,j} > 0$ it might be that the item has been, by chance, mixed with defective items in all the tests that were made on it. The fact that level $V_{i,j}$ is attained at least twice is a much stronger indication of the defectiveness of $(i,j)$, as a false positive in that case would mean that it has been by chance mixed in two pools with different defective items sharing exactly the same load, and that in all other pools, there was at least one item with a larger load. In the asymptotic we will consider, this will not occur with large probability, and similarly if $I_{i,j} = 1$, with high probability the item will be negative.

\begin{remark}
Observe the procedure we describe here to assess the load and status of each item is not the most accurate. With extra care, one could gain more precision of the reconstruction, for example by checking that each measured load in the pools has been associated to at least one item. However, the procedure described here has the advantage of simplicity and locality: to give the status of an item, one has only to consider the results of the tests related to this item. This makes the forthcoming computation of the probability that an item is wrongfully characterized significantly easier, and it remains efficient enough in the range of parameters we consider.
\end{remark}

We sum up the complete procedure inside Algorithm~\ref{algo:main} and a concrete toy example in Figure~\ref{fig:example}.
\begin{algorithm}[h]
  \caption{Grid Pool Testing}
  \label{algo:main}
\SetAlgoLined
 \textbf{Parameters:} n,L,\\
 \textbf{Inputs:}  \({\bf X}=(X_1,\dots,X_{n^2})\)\\
 Store \({\bf X}\) inside the grid \((X_{i,j})_{1\le i,j\le n}\) line by line\;
 Define \(P_1,\dots,P_n\) as the lines, \(P_{n+1,\dots,P_{2n}}\) as the columns and \(P_{2n+1},\dots,P_{nL}\) as the diagonals\;
 Initialize a matrix \({\bf S}=(S_{i,j})_{i,j}\) of empty lists\;
 \For{\(\ell=1,\dots,nL\)}{
 Compute \(V_{\ell}=\max_{(i,j)\in P_{\ell}} X_{i,j}\)\;
 Append \(V_{\ell}\) to every \(S_{i,j}\) with \((i,j)\in P_{\ell}\)\;
 }
 Initialize a matrix \({\bf R}=(R_{i,j})_{i,j}\) of zeros\;
 \For{\(1\le i,j\le n\)}{
 Compute \(V_{i,j}=\min_{s\in S_{i,j}}s\)\;
 Compute \(I_{i,j}=\sum_{s\in S_{i,j}} \ind{s=V_{i,j}}\)\;
 \If{\(V_{i,j}\neq 0\) and \(I_{i,j}\geq 2\)}{
 Set \(R_{i,j}=1\)\;
 }
 }
 Store the matrix \({\bf R}\) line by line into a vector \((R_1,\dots,R_{n^2})\)\;
 \textbf{Result:} \((R_1,\dots,R_{n^2})\)
\end{algorithm}

\paragraph{Efficiency and optimization.}
It is worth noting that the algorithmic complexity of Algorithm~\ref{algo:main} is $O(n^2L)$. In terms of test usage, it is easy to compute the efficiency of this algorithm as there are a total of \(nL\) pools of \(n\) items that are tested, in an effort to detect defective elements among \(n^2\) items. The corresponding efficiency is then
\begin{equation*}
  E=\frac{nL}{n^2}=\frac{L}{n}.
\end{equation*}
To complete the study of this algorithm, one then need to compute its false positives (when \(R_{i,j}=1\) implying detection as defective while \(X_{i,j}=0\) so the item is non-defective) and false negatives (when \(R_{i,j} = 0\) whereas \(X_{i,j}>0\)) rates.

We denote by \(FPR\) (respectively \(FNR\)) the expected number of false positives and false negatives returned by this algorithm, divided by the total number of contaminated items. These two quantities depend on the four parameters \(p,K,n\) and \(L\) in an intricate fashion. However, note that while \(n\) and \(L\) are integer parameter of the algorithms we can choose, \(p \in [0,1]\) and \(K \in \bar{\N}\) are modelling parameters of the problem, representing respectively the proportion of defective items and the accuracy of the test. Therefore the main goal of this study is to optimize the efficiency $E$ of this algorithm by choosing the optimal \(n(p,K)\) and \(L(p,K)\) in a way that insures that \(FPR\) and \(FNR\) both stay below fixed quantities $\epsilon$ and $\delta$.
Our main results are considered under the asymptotic $p \to 0$ of a small proportion of defective items. But as most of the computations made are explicit before taking limits, computing the optimal value of \(n\) and \(L\) for given values \(p,K\) remains straightforward.

\begin{remark}
\label{rem:falseNegative}
Note that an item is falsely labelled as negative if it is part of at most one pool in which it is the item with the largest load. It corresponds to items at position $(i,j)$ such that $I_{i,j} = 1$ in the above algorithm. Therefore, items such that $I_{i,j}$ could be labelled as inconclusive and tested again in a separate batch in a two-steps algorithm with no false negative.
\end{remark}

\begin{figure}[h]
  \begin{center}
    \includegraphics[scale=0.7]{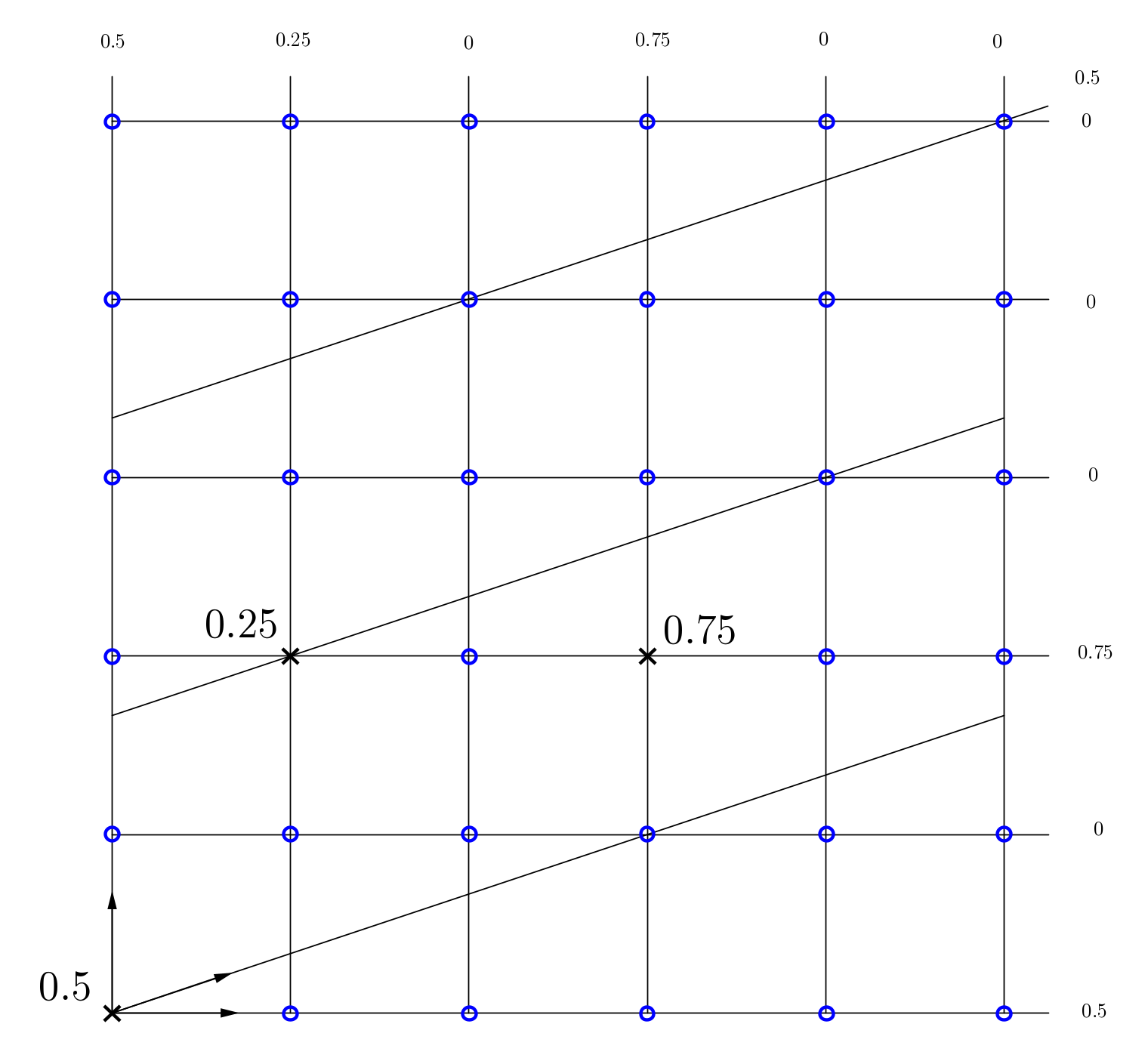}
    \caption{A grid with \(L=3\), \(n=6\), \(K=4\) and \(N=36\). There are 3 defective items of respective levels 0.25, 0.5 and 0.75. For the sake of clarity,  we only showed the test involving the bottom leftmost item for the diagonal of slope \(1/3\) (hence five more test are not represented here). The blue circle represent the healthy items whereas the black crosses represent the defective items for whom the level of defectiveness is specified.}
    \label{fig:example}
  \end{center}
\end{figure}

\section{Computation of the false negative and false positive rates}
\label{sec:computationFNR-FPR}
\subsection{Exact calculations}
In this section, we give explicit upper bounds on the false negative and false positive rates of the algorithm. This does not take into account a possible defective measurement of the loads of the pools.
The false negative rate of the algorithm is expressed as the probability that a contaminated item is not detected at the end of the algorithm whereas the false positive rate is the probability that a non-defective item is detected as defective. In Algorithm~\ref{algo:main}, it is straightforward to compute the false negative/positive rates, as the reconstructed status of an item only depend on the status of items sharing a pool with it. This algorithm being unchanged by changing the coordinates of the grid, as on a torus, these rates do not depend on the position $(i,j)$ of the item in the grid.
We thus only compute the false negative probability of item $(1,1)$, given its load. For all $x \in (0,1]$, we set
\begin{equation*}
  FN(x)=\P ((1,1)\text{ is declared negative }|X_{1,1}=x),
\end{equation*}
and
\begin{equation*}
  FP(x) = \P((1,1) \text{ is declared positive with load }x | X_{1,1} = 0).
\end{equation*}
The false negative rate \(FNR\) and the false positive rate \(FPR\) are then given by
\begin{align*}
  FNR(n,L;p,K) &=p\E\left( FN(\ceil{KU}/K) \right)\\
  FPR(n,L;p,K) &= (1-p)\sum_{k=1}^K FP(k/K)
\end{align*}
with $U$ a uniform random variable on $[0,1]$ and the convention that $\ceil{\infty U}/\infty = U$ and $FPR(n,L;p,\infty)=0$. We used here that a given item is defective with probability $p$ and non-defective with probability $1-p$. The following Proposition gives two upper bounds on the false positive/negative rates.

\begin{proposition}
  \label{prop:exact_upper_bounds}
  Let $x = k/K$ and let \(g_{n,p}(x)=(1-p(1-x))^{n-1}\). Then it holds that
  \begin{equation}
    \label{eqn:genericFalseNegative}
    FN(x) \leq  L\left( 1 - g_{n,p}(x) \right)^{L-1}
  \end{equation}
  and that
  \begin{equation}
    \label{eqn:genericFalsePositive}
    FP(x) \leq \frac{L(L-1)}{2} \left(\frac{np}{K}\right)^2 g_{n,p}(x)^2 \left( 1 - g_{n,p}(x) \right)^{L-2}.
  \end{equation}
  In particular, when \(K \to \infty\), the false positive rate \(FPR(n,L;p,K)\) tends to 0.
\end{proposition}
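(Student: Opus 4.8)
The plan is to fix a single item, say $(1,1)$, and to study the $L$ pools that contain it, using throughout one structural fact coming from Lemma~\ref{lem:nombre}: since any two items share at most one pool, any two of the $L$ pools through $(1,1)$ meet \emph{only} at $(1,1)$. Consequently, if I write $M_\ell=\max\{X_{i,j}:(i,j)\in P_\ell,\ (i,j)\neq(1,1)\}$ for the largest load among the $n-1$ other items of pool $\ell\in\mathcal{P}_{1,1}$, then the sets of ``other items'' are pairwise disjoint and the variables $(M_\ell)_{\ell\in\mathcal{P}_{1,1}}$ are independent. A one-line computation gives the per-pool law I will need: a single item has load at most $x=k/K$ with probability $1-p+px=1-p(1-x)$ (it is either non-defective, or defective with load $\leq x$, which has probability $px$ in both the discrete and the continuous model), whence $\P(M_\ell\leq x)=(1-p(1-x))^{n-1}=g_{n,p}(x)$ and $\P(M_\ell>x)=1-g_{n,p}(x)$. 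Independence across pools together with this explicit law is the common engine behind both estimates.

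For \eqref{eqn:genericFalseNegative} I condition on $X_{1,1}=x>0$. Then $V_\ell=\max(x,M_\ell)\geq x$ for every pool, so $V_{1,1}\geq x>0$ and the item is declared negative precisely when $I_{1,1}\leq 1$. The key observation is a sufficient condition for correct detection: if at least two pools satisfy $M_\ell\leq x$, then $V_{1,1}=x$ is realised in those pools and $I_{1,1}\geq 2$. Taking the contrapositive, the event that $(1,1)$ is declared negative is contained in the event that at most one pool has $M_\ell\leq x$, i.e.\ that at least $L-1$ of the $L$ pools have $M_\ell>x$. By independence this is a binomial tail, $\P(\mathrm{Bin}(L,1-g_{n,p}(x))\geq L-1)$, which a union bound over the choice of the $L-1$ pools bounds by $L\,(1-g_{n,p}(x))^{L-1}$.

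For \eqref{eqn:genericFalsePositive} I condition on $X_{1,1}=0$, so that $V_\ell=M_\ell$. Being declared positive with load $x$ forces $V_{1,1}=\min_\ell M_\ell=x$ with this minimum attained at least twice, i.e.\ at least two pools have $M_\ell=x$ and all remaining pools lie above level $x$. I would estimate this by a union bound over the unordered pair $\{\ell_1,\ell_2\}$ of pools realising the value $x$, which yields the combinatorial factor $\tfrac{L(L-1)}{2}$, a factor $\P(M_\ell=x)^2$ for the two selected pools, and the factor $(1-g_{n,p}(x))^{L-2}$ for the remaining $L-2$ pools. The heart of the matter is the single-pool estimate $\P(M_\ell=x)\leq\tfrac{np}{K}\,g_{n,p}(x)$: the event $\{M_\ell=x\}$ requires all other items to have load $\leq x$ (probability $g_{n,p}(x)$) and at least one of the at most $n$ other items to be defective with load exactly $k/K$, an event of probability $p/K$ per item, so a union bound supplies the prefactor $np/K$. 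Substituting gives the claimed bound.

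Finally, summing \eqref{eqn:genericFalsePositive} over $k=1,\dots,K$ and using $g_{n,p}^2(1-g_{n,p})^{L-2}\leq1$ yields $FPR(n,L;p,K)\leq(1-p)\tfrac{L(L-1)}{2}\tfrac{n^2p^2}{K}$, which tends to $0$ as $K\to\infty$ with $n,L,p$ fixed, giving the last assertion. I expect the false positive estimate to be the delicate step: one must treat the over-counting carefully when three or more pools tie at the minimal value $x$, and the clean bound $\P(M_\ell=x)\leq\tfrac{np}{K}g_{n,p}(x)$ has to be extracted from the exact expression $g_{n,p}(x)-(1-p(1-x)-p/K)^{n-1}$ via the union bound just described. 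By contrast, once independence and the sufficient-detection criterion are in hand, the false negative bound is a one-line binomial tail estimate.
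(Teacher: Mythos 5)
Your proposal is correct and matches the paper's proof essentially step for step: both rest on the multipool property of Lemma~\ref{lem:nombre} to get independence of the $L$ pools through $(1,1)$, on the per-pool law $\P(V_{\ell}\le x \mid X_{1,1}=x)=(1-p(1-x))^{n-1}=g_{n,p}(x)$, on a union bound yielding $L(1-g_{n,p}(x))^{L-1}$ for \eqref{eqn:genericFalseNegative}, and on a union bound over pairs of pools tied at the minimal value combined with the discrete-increment estimate $\P(V_{\ell_0}=x\mid X_{1,1}=0)\le \frac{np}{K}g_{n,p}(x)$ for \eqref{eqn:genericFalsePositive}. The only differences are cosmetic — you phrase the false-negative step as a contrapositive/binomial-tail argument rather than decomposing over the pool achieving the minimum, and you spell out the $K\to\infty$ conclusion that the paper leaves implicit — and your write-up even shares the paper's own slight looseness in the false-positive step (treating $\P(V_{\ell}\ge x)$ as $1-g_{n,p}(x)$, which is really $\P(V_{\ell}> x)$, and the prefactor $\frac{np}{K}g_{n,p}(x)$ versus the exact union-bound value $\frac{(n-1)p}{K}(1-p(1-x))^{n-2}$).
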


\begin{proof}
  We observe that the probability to wrongfully declare an item as negative depends on $K$ (in the discrete case) only through the fact that $x$ takes its values in $\{1/K,2/K\ldots, 1\}$. This allows us to give a unified expression for the upper bound of $FN$.

  Given $x$ the load of the item $(1,1)$, we note this item will be wrongfully declared negative in Algorithm~\ref{algo:main} if and only if $I_{1,1}=1$ (as $V_{1,1} \geq x > 0$ a.s.). Decomposing according to the test in $\mathcal{P}_{1,1}$ measuring the lowest viral load, we have
  \begin{equation*}
      FN(x) = \P(I_{1,1}=1|X_{1,1} = x) = \sum_{\ell \in \mathcal{P}_{1,1}} \P(V_\ell < \min_{{\ell' \neq \ell}} V_{\ell'} | X_{1,1}=x)
      = L \E(\phi(V_\ell)|X_{1,1}=x),
  \end{equation*}
  where $\phi(y) = \P(\min_{{\ell' \neq \ell}} V_{\ell'} > y| X_{1,1}=x)$ and $\ell_0$ is a fixed element of $\mathcal{P}_{1,1}$. Using that the $(V_\ell, \ell \in \mathcal{P}_{1,1})$ remain i.i.d. conditionally on $X_{1,1}= x$, thanks to Lemma~\ref{lem:nombre}, we have $\phi(y) = \P(V_{\ell_0} > y|X_{1,1} = x)^{L-1}$. Then, using that $\phi(y) \leq \phi(x)$ for all $y \geq x$, we obtain
  \begin{equation*}
    FN(x) \leq L \P(V_{\ell_0} > x|X_{1,1} = x)^{L-1}.
  \end{equation*}
As in our setting there is a proportion $x$ of positive items with load smaller than $x$, we obtain
  \[
    \P ( V_\ell \le x| X_{1,1} = x) = (1 - p + px)^{n-1} = (1 - p(1-x))^{n-1} = g_{n,p}(x),
  \]
  which leads to the following upper bound for the false negative rate of an item with load $x$,
  \begin{equation*}
    FN(x) \leq  L\left( 1 - g_{n,p}(x) \right)^{L-1}.
  \end{equation*}

  We can similarly compute the false positive rate of the algorithm by computing the probability that conditionally on $(1,1)$ being non-contaminated, this item is determined to be contaminated. This would happen if and only if $(1,1)$ is only part of contaminated pools, and that the two pools with the lowest measured load have the same value, that we write $x$. Noticing that false positive results never occur in the infinite precision setting $K = \infty$, we assume here that $K < \infty$.
  Using again that we have a multipool and that the measure of each test is independent conditionally on the value of $X_{1,1}$ we obtain
  \begin{align*}
    FP(x) &=\P(I_{1,1}\ge 2,\ V_{1,1}=x | X_{1,1} = 0) \\
    &=\P(\exists \ell_1,\ell_2\in \mathcal{P}_{1,1}, V_{\ell_1}=V_{\ell_2}=x, \  \forall \ell \in \mathcal{P}_{1,1}\backslash \{\ell_1,\ell_2\}, V_\ell \ge x)\\
    &\le \sum_{\{\ell_1,\ell_2\} :\ell_1\neq \ell_2} \P(V_{\ell_1}= x|X_{1,1}=0) \P(V_{\ell_2}= x|X_{1,1}=0) \prod_{\ell' \in \mathcal{P}_{1,1}\backslash \{\ell_1,\ell_2\}} \P (V_{\ell'} \geq x|X_{1,1}=0)\\
    &=\frac{L(L-1)}{2}\P(V_{\ell_0}= x|X_{1,1}=0)^2\P (V_{\ell_0} \geq x|X_{1,1}=0)^{L-2},
  \end{align*}
  with $\ell_0$ a fixed element of $\mathcal{P}_{1,1}$.
  Let \(F(x)\) be the distribution function of \(V_{\ell_0}\) conditionally on $X_{1,1}=0$. Then, we can use the upper bound
  \begin{equation*}
    \P(V_{\ell_0} = x|X_{1,1}=0)=F(x)-F(x-1/K)\le \frac{\sup_{[x-\frac{1}{K},x]}F'(u)}{K}F(x)\le \frac{np}{K}(1-p(1-x))^{n-1}
  \end{equation*}
  We finally get
  \begin{equation*}
    FP(x) \leq \frac{L(L-1)}{2} \left(\frac{np}{K}\right)^2 g_{n,p}(x)^2 \left( 1 - g_{n,p}(x) \right)^{L-2}. \qedhere
  \end{equation*}
\end{proof}

In the rest of the article, we compute the optimal efficiency under different constraints, based on the above constructed pools, in different situations. We first consider non-adaptive strategies for detection of defective items based on the measure of lines and columns only, then adding eventually item tests for items whose status cannot be deduced by the first step algorithm. We then aim at optimal testing efficiency, assuming that samples can be infinitely divided, and recover results consistent with Mézard et al \cite{MTT08}. In Section \ref{sec:simulations}, we compare our asymptotic estimates with simulated experiments, and obtain the false positive/false negative rates and efficiency that can be archived in real testing conditions.

\section{Asymptotics of the false discovery rates at \texorpdfstring{\(L\)}{\textit{L}} fixed}
In this section, we derive equivalent expressions for the upper bound of the false discovery rates when the values of \(K\) and \(L\) remain fixed. We consider two asymptotic cases, when \(np \to 0\) and when \(np \to \lambda>0\). It is implicitly assumed that \(n \to \infty\) and \(p\to 0\). In the second case, the calculations are based on the Poisson approximation of the number of defective items in a specific pool and are consequently more accurate than the rates in the first case.

\paragraph{Case \(np \to 0\).}
In this case, \(g_{n,p}(x) \) can be lower bounded by \(g_{n,p}(0)\) since it is a increasing function and \(g_{n,p}(0)\sim e^{-np}\). Then
\begin{equation*}
  FNR \le Lp\E (1-g_{n,p}(U))^{L-1} \le Lp (1-g_{n,p}(0))^{L-1} \sim Lp(np)^{L-1}.
\end{equation*}
The false discovery rate is then upper bounded by
\begin{equation*}
  FPR \le K\frac{L(L-1)}{2} \left(\frac{np}{K}\right)^2 \left( 1 - g_{n,p}(0) \right)^{L-2}\sim\frac{L(L-1)}{2K}(np)^L
\end{equation*}
In particular, we see that in this regime (as \(K\) and \(L\) remain fixed), the false negative rate is small with respect to the false positive rate.

\paragraph{Case \(np \to \lambda >0\).}
In this situation, with $L$ being fixed, we immediately obtain that the number of defective items in each pool converges to a Poisson($\lambda$) random variable. We consider the asymptotic behaviour of the false positive and false negative rates obtained in this situation. We write
\[
  \bar{FNR}(\lambda,L;K) = \lim_{n \to \infty} p^{-1} FNR(n,L;\lambda/n,K) \quad \text{and} \quad \bar{FPR}(\lambda,L;K) = \lim_{n \to \infty} FPR(n,L;\lambda/n,K),
\]
as a function of $L$ and the precision $K$.

\begin{proposition}
  \label{prop:poisson_FNR_FPR}
  Under the condition \(np\to \lambda>0\), we have that
  \begin{equation*}
    \bar{FNR}(\lambda,L;\infty) = (1 + (L-1)e^{-\lambda})(1 - e^{-\lambda})^{L-1},
  \end{equation*}
  and, for all $K < \infty$,
  \begin{equation*}
    \bar{FNR}(\lambda,L;K) \leq L(1 -e^{-\lambda})^{L-1} \quad \text{and} \quad \bar{FPR}(\lambda,L;K)\le\frac{L(L-1)}{2K} (1 - e^{-\lambda})^{L-2}
  \end{equation*}
\end{proposition}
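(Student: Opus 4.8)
The plan is to reduce all three estimates to the single asymptotic fact that when \(np \to \lambda\) the quantity \(g_{n,p}(x) = (1-p(1-x))^{n-1}\) converges to \(e^{-\lambda(1-x)}\), since \((n-1)p(1-x) \to \lambda(1-x)\) for fixed \(x\) (or fixed \(x=k/K\)). Each claim then follows by inserting this limit into the finite-\(n\) expressions of Proposition~\ref{prop:exact_upper_bounds} and justifying the passage to the limit by bounded/dominated convergence, which is painless because all the sums are finite and every integrand lies in \([0,1]\). I would treat the three claims in turn.

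For the identity governing \(\bar{FNR}(\lambda,L;\infty)\) I would first produce an \emph{exact} finite-\(n\) formula for \(FN(x)\) in the atomless case, sharper than the generic bound \eqref{eqn:genericFalseNegative}. Fix the load \(x=X_{1,1}\) and call a pool \(\ell\in\mathcal{P}_{1,1}\) \emph{dominated} if it contains another item of load strictly larger than \(x\); by the multipool property (Lemma~\ref{lem:nombre}) any two of these pools meet only at \((1,1)\), so the domination events depend on disjoint sets of items and are independent, each of probability \(1-g_{n,p}(x)\). Hence the number \(M\) of dominated pools is \(\mathrm{Bin}(L,1-g_{n,p}(x))\). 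Since loads are atomless, ties occur with probability zero, so \((1,1)\) is declared negative precisely when \(I_{1,1}\le 1\), i.e. when it is the maximum of at most one of its pools, i.e. when \(M\ge L-1\). Summing the two binomial terms gives the exact identity
\[
  FN(x) = (1-g_{n,p}(x))^{L-1}\bigl(1+(L-1)g_{n,p}(x)\bigr).
\]
Letting \(n\to\infty\) under the expectation turns \(p^{-1}FNR\) into \(\int_0^1 (1-e^{-\lambda(1-u)})^{L-1}(1+(L-1)e^{-\lambda(1-u)})\,du\); the announced closed form is the value of the integrand at its supremum. The clean point here is that \(g\mapsto (1-g)^{L-1}(1+(L-1)g)\) is nonincreasing, with derivative \(-L(L-1)g(1-g)^{L-2}\le 0\), so the integrand is maximal at \(u=0\) where \(g_{n,p}(0)\to e^{-\lambda}\), which recovers \((1+(L-1)e^{-\lambda})(1-e^{-\lambda})^{L-1}\).

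The two bounds for \(K<\infty\) come straight from Proposition~\ref{prop:exact_upper_bounds}. For the false negatives, \eqref{eqn:genericFalseNegative} together with monotonicity of \(g_{n,p}\) gives the uniform bound \(FN(x)\le L(1-g_{n,p}(0))^{L-1}\), and \(g_{n,p}(0)\to e^{-\lambda}\) yields \(\bar{FNR}(\lambda,L;K)\le L(1-e^{-\lambda})^{L-1}\) at once. For the false positives I would pass to the limit in \eqref{eqn:genericFalsePositive} and use \((1-g_{n,p}(k/K))^{L-2}\le (1-g_{n,p}(0))^{L-2}\to (1-e^{-\lambda})^{L-2}\) to pull that factor out of the sum, leaving a prefactor of the form \(\frac{L(L-1)}{2}(1-e^{-\lambda})^{L-2}\) times the limit of \(\frac{(np)^2}{K^2}\sum_{k=1}^K g_{n,p}(k/K)^2\).

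The main obstacle is this last summation. The naive limit is \(\frac{\lambda^2}{K^2}\sum_{k=1}^K e^{-2\lambda(1-k/K)}\), a geometric series equal to \(\frac{\lambda^2}{K^2}\cdot\frac{1-e^{-2\lambda}}{1-e^{-2\lambda/K}}\), and it is this factor (not the exponential tail) that must be squeezed down toward \(\frac{1}{K}\). The right route is to keep the sharper per-term Poisson limit \(\P(V_{\ell_0}=k/K\mid X_{1,1}=0)\to e^{-\lambda(1-k/K)}(1-e^{-\lambda/K})\), coming from the Poisson\((\lambda)\) approximation of the number of defectives in a pool; telescoping the resulting geometric series collapses the prefactor to \((1-e^{-\lambda/K})\frac{1-e^{-2\lambda}}{1+e^{-\lambda/K}}\le \frac{\lambda}{K}\) via \(1-e^{-\lambda/K}\le \lambda/K\). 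This is the delicate step: the factorization only controls the prefactor up to a factor of order \(\lambda\), so the stated bound \(\frac{L(L-1)}{2K}(1-e^{-\lambda})^{L-2}\) is exactly the regime \(\lambda\le 1\) of this estimate, and confirming that this is the range in which the bound is intended — rather than carrying the extra factor of \(\lambda\) — is where I would concentrate the remaining care.
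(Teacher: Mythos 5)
Your proposal is correct, and it follows the paper's global strategy (replace \(g_{n,p}(x)=(1-p(1-x))^{n-1}\) by its limit \(e^{-\lambda(1-x)}\) and reuse the union bounds behind Proposition~\ref{prop:exact_upper_bounds}), but one key step is done by a genuinely different and cleaner decomposition. For \(\bar{FNR}(\lambda,L;\infty)\), the paper works directly in the Poisson limit and decomposes over \emph{which} pool realises the minimum and over the \emph{value} \(y\) of that minimum, computing \(L e^{-\lambda(1-x)}(1-e^{-\lambda(1-x)})^{L-1}+L\int_x^1 \lambda e^{-\lambda(1-y)}(1-e^{-\lambda(1-y)})^{L-1}\,\dd y\) and then evaluating the integral. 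Your binomial count of ``dominated'' pools gives the same expression \((1+(L-1)g)(1-g)^{L-1}\) \emph{exactly at finite \(n\)} (independence of the domination events and the almost-sure absence of ties both resting on Lemma~\ref{lem:nombre}, just as in the paper), so the limit is a plain dominated-convergence step and no integral is needed. Note also that both you and the paper really establish the conditional false-negative probability at load \(x=0\), i.e.\ the supremum over loads: since \(\bar{FNR}\) is defined by averaging over the load \(U\), the stated ``equality'' is literally only an upper bound for the averaged rate, and your monotonicity remark (\(g\mapsto(1+(L-1)g)(1-g)^{L-1}\) has derivative \(-L(L-1)g(1-g)^{L-2}\le 0\)) is the honest way to phrase what is actually proved.

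Your worry about the false-positive bound is well founded, and it points to a real gap in the \emph{paper's} proof rather than in yours. The paper asserts the intermediate bound \(\bar{FPR}\le\frac{L(L-1)}{2K^2}\sum_{k}(1-e^{-\lambda k/K})^{L-2}\) ``using similar bounds'', which implicitly requires \(\P(V_{\ell_0}=k/K\mid X_{1,1}=0)\le 1/K\); since this probability equals \(e^{-\lambda(1-k/K)}(1-e^{-\lambda/K})\) in the limit, that inequality holds only for \(\lambda\lesssim 1\). Your geometric-series computation gives the correct general prefactor \((1-e^{-\lambda/K})(1-e^{-2\lambda})/(1+e^{-\lambda/K})\le \lambda/K\), hence the stated bound with an extra factor \(\lambda\), and the restriction is not an artifact of the union bound: for \(L=K=2\) the limiting false-positive rate is exactly \((1+e^{-\lambda})(1-e^{-\lambda/2})^2\), which tends to \(1>\frac{1}{K}\) as \(\lambda\to\infty\), so the proposition as written genuinely fails for large \(\lambda\). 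Since the paper only ever invokes this bound near \(\lambda=\log 2<1\) (Section~\ref{sec:Lmax}), nothing breaks downstream, but you are right to flag that the inequality must be read with the proviso \(\lambda\le 1\), or with the factor \(K(1-e^{-\lambda/K})\le\lambda\) retained.
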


\begin{figure}[ht]
\centering
\includegraphics*[width=8cm]{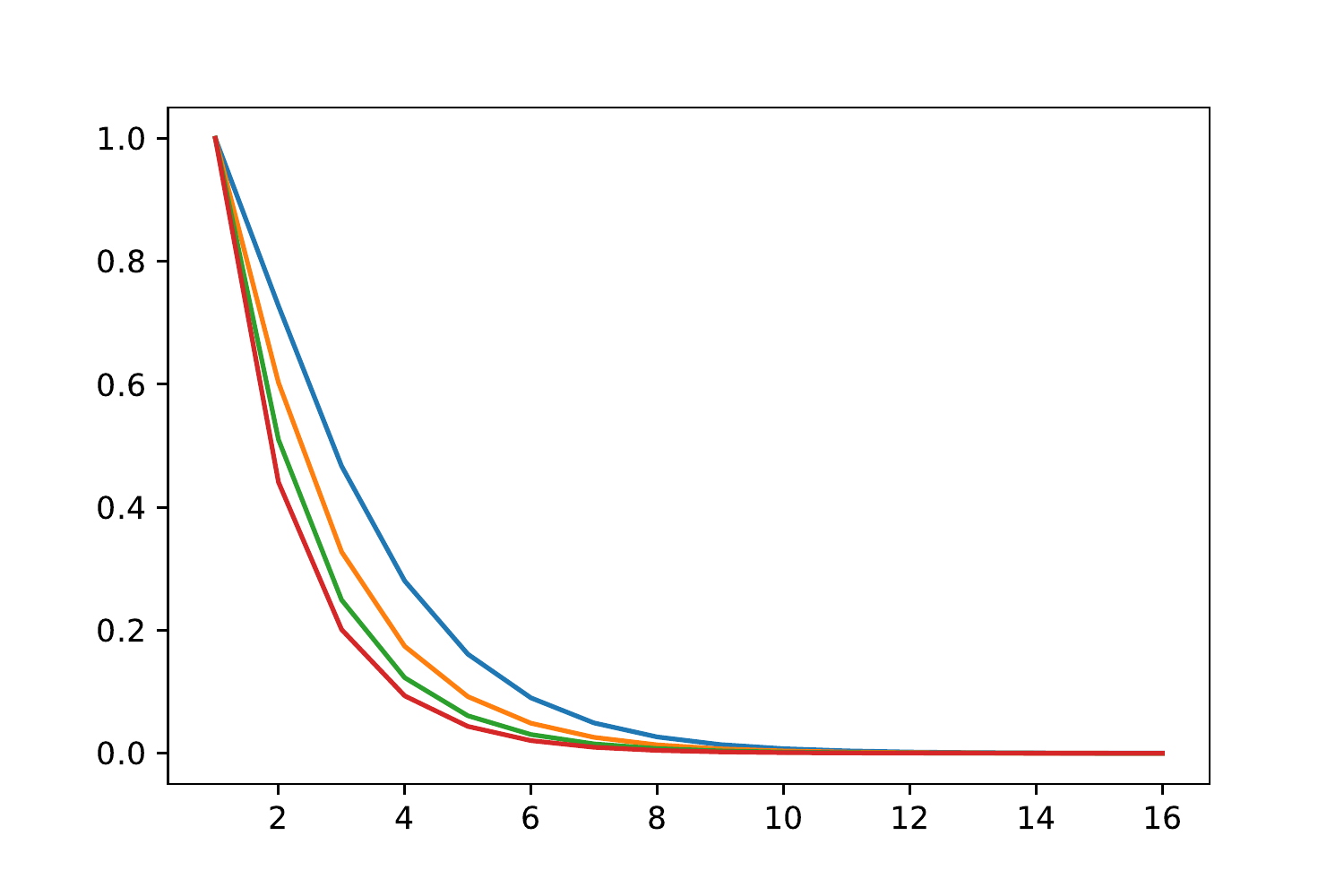}
\includegraphics*[width=8cm]{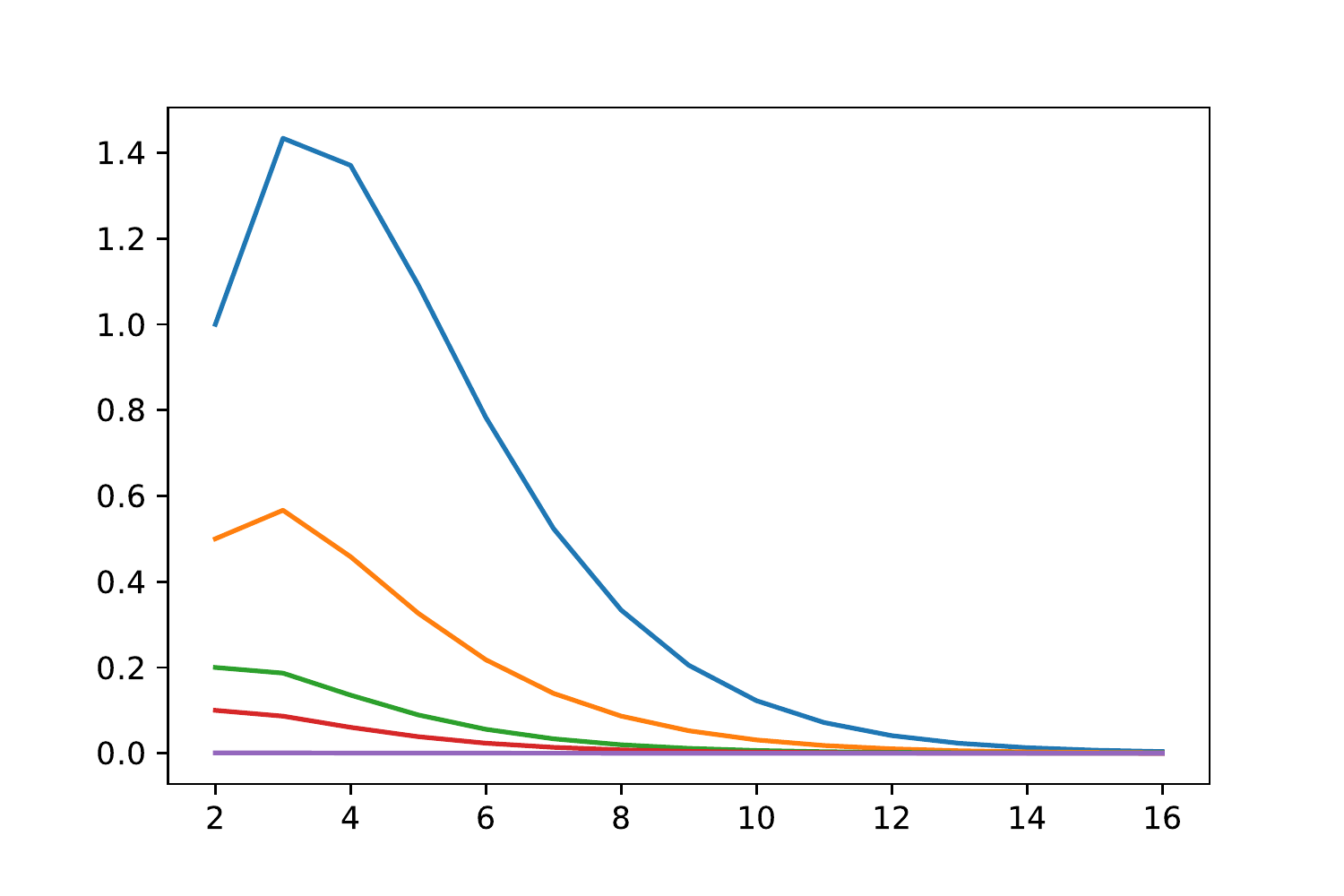}
\caption{False negative and false positive rates as a function of $L$ for $\lambda = \log 2$ and $K = 1$ (in blue), $K=2$ (orange), $K=5$ (green), $K=10$ (red) and $K = \infty$ (purple).}
\label{fig:falseNeg}
\end{figure}

\begin{proof}
  We first compute the false negative rate. From the properties of Poisson processes, we note that the number of items in a pool with load between $x$ and $y$ is distributed as a Poisson random variable with parameter $\lambda (y - x)$, independently of the number of items in this pool with load smaller than $x$ or larger than $y$. In particular, for any given test $\ell$, for any $y \geq x$ we have
  \[
    \P(V_\ell > y | X_{1,1}= x) = (1 - e^{-\lambda(1-y)}).
  \]

  Using this fact, with the same computations as in Proposition~\ref{prop:exact_upper_bounds}, we can compute the false negative rate of an item of load $x$ as
  \begin{align*}
    \bar{FN}(\lambda,L;K) &= L \sum_{y \geq x} \P(V_{\ell_0} = y | X_{1,1} = x) \P(V_{\ell_0} > y|X_{1,1} = x)^{L-1} \\
    &= L e^{-\lambda (1 - x)} (1 - e^{-\lambda(1-x)})^{L-1} + \sum_{y=x+1}^K e^{-\lambda(1-y)}(1 - e^{-\lambda/K}) (1 - e^{-\lambda(1-y)})^{L-1}\\
    &\leq L(1 - e^{-\lambda(1-x)})^{L-1}.
  \end{align*}
  In the case $K = \infty$ the computations can be made explicit as in that case
  \begin{align*}
    \bar{FN}(\lambda,L;\infty) &= L e^{-\lambda (1 - x)} (1 - e^{-\lambda(1-x)})^{L-1} + L \int_x^1 \lambda e^{-\lambda(1-y)} (1 - e^{-\lambda (1-y)})^{L-1} \dd y\\
    &= L e^{-\lambda (1 - x)} (1 - e^{-\lambda(1-x)})^{L-1} + (1 - e^{-\lambda(1-x)})^L = (1 + (L-1)e^{-\lambda(1-x)})(1 - e^{-\lambda(1-x)})^{L-1}.
  \end{align*}
  In particular, we also obtain $\bar{FN}(\lambda,L;\infty) \leq L(1 - e^{-\lambda(1-x)})^{L-1}$.

  Similarly, we can compute the false positive rate of a negative item in this regime. An item is falsely identified as a false positive if it does not belong to any pool with a negative item, and that if the smallest non-null value observed among the pools is attained at least twice.
  Using similar bounds as in the previous sections, we obtain
  \[
    \bar{FPR}(L;K) \leq \frac{L(L-1)}{2K^2} \sum_{k=1}^K (1-e^{-\lambda k/K})^{L-2}.
  \]
  Bounding the above quantity by $\frac{L(L-1)}{2K} (1 - e^{-\lambda})^{L-2}$, we obtain the result.
\end{proof}

From this formula, the false positive rate can be written explicitly as the probability that among $L$ independent copies with the above distribution, the first and second running minimum are equal. We plot once again this function in $L$ for different values of $K$.

\section{Optimizing one-step testing with  \texorpdfstring{\(L\)}{\textit{L}} fixed}
\label{sec:petitL}

In this section, we look to optimize Algorithm~\ref{algo:main} while assuming that the number $L$ of tests that can be made on each item is finite. This regime is relevant in particular if a test destroys or damages a sample of the item, so that limiting the number of tests made on each item becomes relevant. In the rest of the section $L$ is a fixed constant, and $n$ is a number with no prime factor smaller than $L-1$. In that situation Lemma~\ref{lem:nombre} holds and we are working with multipools, so that the formulas \eqref{eqn:genericFalseNegative} and \eqref{eqn:genericFalsePositive} both hold.

Recall that the efficiency of the algorithm is $E = \frac{L}{n}$, therefore to improve the efficiency of the algorithm, one has to increase the size of the grid. However, augmenting the value of $n$ has the effect of increasing the false positive and false negative rates. Therefore, to find the optimal efficiency of Algorithm~\ref{algo:main}, we fix $\epsilon > 0$ and $\eta > 0$ as maximal values for the proportion of positive and negative items wrongfully labelled as negative and positive respectively, and we choose $n$ as large as possible such that
\[
  FNR(n,L;p,K) \leq p\epsilon \quad \text{and} \quad FPR(n,L;p,K) \leq (1-p)\eta.
\]
As the average number of false negatives found by the algorithm is of the order $n^2p\epsilon$, we will also consider bounds in the regime when $\epsilon \to 0$ as $n \to \infty$.

\paragraph{A choice of \(n\) for $\epsilon$ and $\eta$ fixed.}
In this case, one has to choose \(n\) accordingly to have \(L(np)^{L-1} \le p \epsilon\) and \(\frac{L(L-1)}{2K}(np)^L \le \eta\). The largest \(n\) that satisfies the first condition is
\begin{equation*}
  n_1 = p^{-1}\left(\frac{\epsilon}{L}\right)^{1/(L-1)}
\end{equation*}
and the largest \(n\) that satisfies the second condition is
\begin{equation*}
  n_2 = p^{-1}\left(\frac{2K\eta}{L(L-1)}\right)^{1/L}.
\end{equation*}
We recommend to choose \(n\) as
\begin{equation}
  \label{eq:universal_choice}
  n\sim p^{-1} \min \left(\big(\frac{\epsilon}{L}\big)^{1/(L-1)} ; \big(\frac{2K\eta}{L^2}\big)^{1/L}\right)
\end{equation}
This choice of \(n\) gives a efficiency of the algorithm given by
\begin{equation}
E_{\epsilon,\eta}(p)=  p \max \left( \big(\frac{L^L}{\epsilon}\big)^{1/(L-1)} ; \big(\frac{L^{L+2}}{2K\eta}\big)^{1/L}\right).
\end{equation}
Note that these choices of values for \(n\) are driven by the results of Proposition~\ref{eqn:genericFalsePositive} and hence are quite conservative, so the efficiency obtained here is an upper bound of the true optimal efficiency of Algorithm~\ref{algo:main}.
Note that in a high precision setting (when \(K\) is large) the false positive are a minority inside the false discovery of the algorithm, and the efficiency will depend only on $p,L$ and $\epsilon$.

We observe that the number of tests to use per item to detect defective ones with fixed false negative/positive rate becomes proportional to $p$ the proportion of true negative. In other words, the total number of tests this algorithm need to detect defective items becomes ultimately proportional to the number of defective items when the number of non-defective items is large. This is a notable improvement on pool testing with $\{0,1\}$ response, in which the known optimal asymptotic efficiency is proportional to the product of the number of defective items and the log of the number of non-defective ones.

\paragraph{A choice of \(n\) for vanishing $\epsilon$ and $\eta$.}
Observe as well that in the settings we discuss, the expected number of false negatives in a given grid will grow as $\epsilon p n^2$ and the number of false positive as $\eta n^2$. It may be inconvenient to let the expected number of false discovery to grow as $n$ becomes large. To avoid a positive proportion of items on the grid being false negatives, one could instead consider a maximal false positive rate of $\epsilon = \alpha/(p n^{2})$ as $n \to \infty$ and $\eta = \beta/n^2$.

In this situation, with similar computations as above, we obtain optimal choices of
\[
  n_1 = \left(  \frac{\alpha}{Lp^L} \right)^{1/(L+1)} \quad \text{and} \quad n_2 = \left(  \frac{2 K \beta}{L(L-1) p^L} \right)^{1/(L+2)}
\]
and the associated efficiency to $L$ and $n = \min(n_1,n_2)$ is
\[
  E_{\alpha,\eta}(p) = \max \left(  p^{L/(L+1)}\big(\frac{L^{L+2}}{\alpha}\big)^{1/(L+1)} ; p^{L/(L+2)} \big(\frac{L(L-1)}{2K\beta}\big)^{1/(L+2)}\right) \quad
\]
This efficiency is much larger than $E_{\epsilon,\eta}(p)$, as expected from the lower tolerance to false negatives. It behaves as a power of $p$ as $p \to 0$. Remark that for $L = 2$, we have $E_{\alpha,\eta}(p) \sim C_{\alpha} p^{2/3}$ as $p \to 0$, so even with these settings, the algorithm becomes more efficient than Dorfman's method for $p$ small enough, to the cost of a fixed proportion $\eta$ of false positive. For $L = 3$, the algorithm becomes more efficient than Dorfman's algorithm even with a vanishingly small number of false positive.

\section{Optimal choice of \texorpdfstring{$L$}{L} as a function of \texorpdfstring{$p$}{p}}
\label{sec:Lmax}

In this section, we relax the assumption that $L$ has to be kept fixed, and aim at choosing an optimal couple $n,L$ so that the efficiency of the algorithm $E = L/n$ is as small as possible, while controlling the false positive and false negative rates. While it is not mentioned explicitly, it is assumed everywhere in this section that $L-2$ is smaller than the smallest prime factor of $n$, so that Lemma~\ref{lem:nombre} can be applied. Using the growth rate of primordial numbers and the fact that an optimal choice of $L$ will remain finite, there will always be a couple $(n,L)$ satisfying the assumption of Lemma~\ref{lem:nombre} close enough to the optimal theoretical choice, so this condition won't play a role in the asymptotic behaviour of the obtained efficiency.We first investigate the case when the precision of the loads \(K\) is infinite so that the choices of \(n\) and \(L\) are only driven by the reduction of the false negatives in the algorithm.

We take interest in the quantity
\begin{equation}
  E^*_\epsilon(p) := \min \left\{ \tfrac{L}{n}, n, L \in \N : FNR(n,L;p,\infty) \leq \epsilon p \right\}.
\end{equation}
In this new context, $L$ no longer fixed, and its choice might depend on $p$ and $\epsilon$.

We recall that by Proposition~\ref{prop:poisson_FNR_FPR} and the computations above, if $n p \to \lambda \geq 0$, we have
\[
  p^{-1} FNR(p,L;p,\infty) \lesssim L (np)^{L-1}.
\]
As a result we are lead to choose $n$ and $L$ such that $L (np)^{L-1} \approx \epsilon$, while minimizing the efficiency of the algorithm $E = \frac{L}{n} \approx p (\epsilon/L)^{1/(L-1)}$. We thus obtain that the efficiency of the algorithm is optimal when $L$ is taken to minimize
\[
  L \mapsto (\epsilon/L)^{1/(L-1)},
\]
and $n$ as $(\epsilon/L)^{1/(L-1)}/p$. Therefore, the optimum is attained by choosing $np \to \lambda > 0$.

To precise the computations in that situation, we use the formula given in Proposition~\ref{prop:poisson_FNR_FPR}. We observe that as $np \to \lambda$, we have
\[
  FNR(p,L;p,\infty) \to \bar{FNR}(\lambda,L;\infty) = (1 + (L-1)e^{-\lambda})(1 - e^{-\lambda})^{L-1}.
\]
Moreover, we have $E(p)/p = \frac{L}{np} \to L/\lambda$. As a result, we have
\[
  \lim_{p \to 0} \frac{E^*_\epsilon(p)}{p} =  \min \left\{ \tfrac{L}{\lambda}, L \in \N, \lambda > 0 : \bar{FNR}(\lambda,L;\infty) \leq \epsilon \right\}.
\]
As $\epsilon \to 0$, the optimal is attained for $\lambda,L$ such that $L \log (1 - e^{-\lambda}) = \log (\epsilon)(1 + o(1))$, yielding
\[
    \lim_{p \to 0} \frac{E^*_\epsilon(p)}{p} = \frac{-\log \epsilon(1 + o(1))}{-\lambda \log (1 - e^{-\lambda})}.
\]
This quantity is minimal for $\lambda = \log 2$.

As a consequence, as $p \to 0$, we recommand using $n \approx (\log 2)/p$, and as $\epsilon \to 0$ $L \approx -\frac{\log \epsilon}{\log 2}$, to obtain an optimal efficiency behaving as
\[
  E^*_\epsilon(p) \lesssim \frac{p(-\log \epsilon)}{(\log 2)^2} \quad \text{ as $p \to 0$ then $\epsilon \to 0$}.
\]
Note in particular that $n$ is chosen depending on the value of $p$, while $L$ is chosen as a function of $\epsilon$ in this asymptotic regime.

\begin{remark}
\label{rem:falseNegative2}
As noted in Remark~\ref{rem:falseNegative}, Algorithm~\ref{algo:main} could be adapted as a two-step algorithm in which every inconclusive item is tested again individually. Note that the upper bound we use for the false negative rate is exactly the rate of inconclusive elements (as we bound $FN(x)$ by $FN(0)$). In this two-step algorithm, the efficiency would therefore be $E^{(2)}(p) = E_\epsilon^*(p) + \epsilon \lesssim  - p \log ( \epsilon)(\log 2)^{-2} + \epsilon$, which is minimal for $\epsilon = (\log 2)^{-2} p$. This two-step algorithm would thus have an efficiency asymptotically bounded by $(\log 2)^{-2}p \log (p)$ as $p \to 0$.
\end{remark}

\begin{remark}
  We also observe that choosing $\epsilon = n^{-1-\alpha}$ for some $\alpha > 0$, the efficiency becomes
  \[
    E^*_\epsilon(p) \approx 2.08(1 + \alpha) p (-\log p ) \quad \text{ as }p \to 0.
  \]
  With this choice of $\epsilon$, the probability of observing one false negative in the grid decay to $0$ as $\epsilon n^2 p \approx n^{-\alpha}$. In that situation, we obtain an efficiency with a similar order of magnitude of the optimal results of \cite{MTT08}, with a simpler (non-random) construction of the algorithm. Actually, the efficiency obtained here attains the lower bound of the optimal efficiency predicted in \cite{MT11} for a two-steps binary pool testing algorithm. Therefore, the non-adaptive Algorithm~\ref{algo:main}, making use of the load value of items, archives the same efficiency as an optimal two-stage algorithm.
\end{remark}

\section{Comparison of the different algorithms}
\label{sec:simulations}

In this section, we illustrate the behavior of our proposed algorithm versus the two-steps Dorfman's algorithm and Mézard's optimal algorithm. We describe the choices of the parameters in the following.
\begin{itemize}
  \item The simulations took the set \(\{3,5,7,11,13,17,19,23,29,31,37,41,43,47\}\) of the odd prime numbers under 50 as the possible values of \(n\).
  \item The algorithm is allowed to perform tests in up to \(L_{\max}=14\) directions. In the case when \(n\le 14\), we obviously restrict the number of directions to \(L_{\max}=n-1\).
  \item The prevalence parameter \(p\) varies between 0.05 and 0.2 with a constant increment of \(0.05\).
  \item We made vary \(K\) inside the set \(\{2,5,10,30,200,500\}\) to illustrate its influence.
  \item For each choice of the parameters \((n,L,p,K)\) above, we run 200 copies Algorithm~\ref{algo:main}.
\end{itemize}
Consequently, for each choice of the set of parameters, we observe 200 copies of the output of the algorithm. Afterwards, the matrix of results is compared to the matrix of the true matrix containing the information of the true positive and negative items. Thanks to that, we compute the mean number (over the 200 copies) of false positive and false negative discovered by the algorithm. Thus, we end with an estimation of the number of false negative \(FN(n,L,p,K)\) and the number of false positive \(FP(n,L,p,K)\). The next step is to compute the optimal value of the efficiency $E$ as a function of \(p\). Then, for any couple \((p,K)\) fixed, we did the following concrete inclusions of the conditions of Section~\ref{sec:petitL}.
\begin{enumerate}
  \item We fixed \(\eta=0.01\) and we discarded all the pairs \(n,L\) such that \(FP(n,L,p,K)\ge \eta (1-p)n^2\).
  \item For each value of \(\epsilon\) we considered, we discarded the pairs such that \(FN(n,L,p,K)\ge \epsilon pn^2\).
  \item Then, from all the remaining values of the pairs \((n,L)\), we minimized the quotient \(E(p)=L/n\).
\end{enumerate}
This value \(E(p)\) as a function of \(p\) (for different values of \(K\)) is the one that we drew in the following illustrations.

\subsection{Comparing the Efficiency with well known algorithms}
\begin{figure}[ht]
  \includegraphics[scale=0.45]{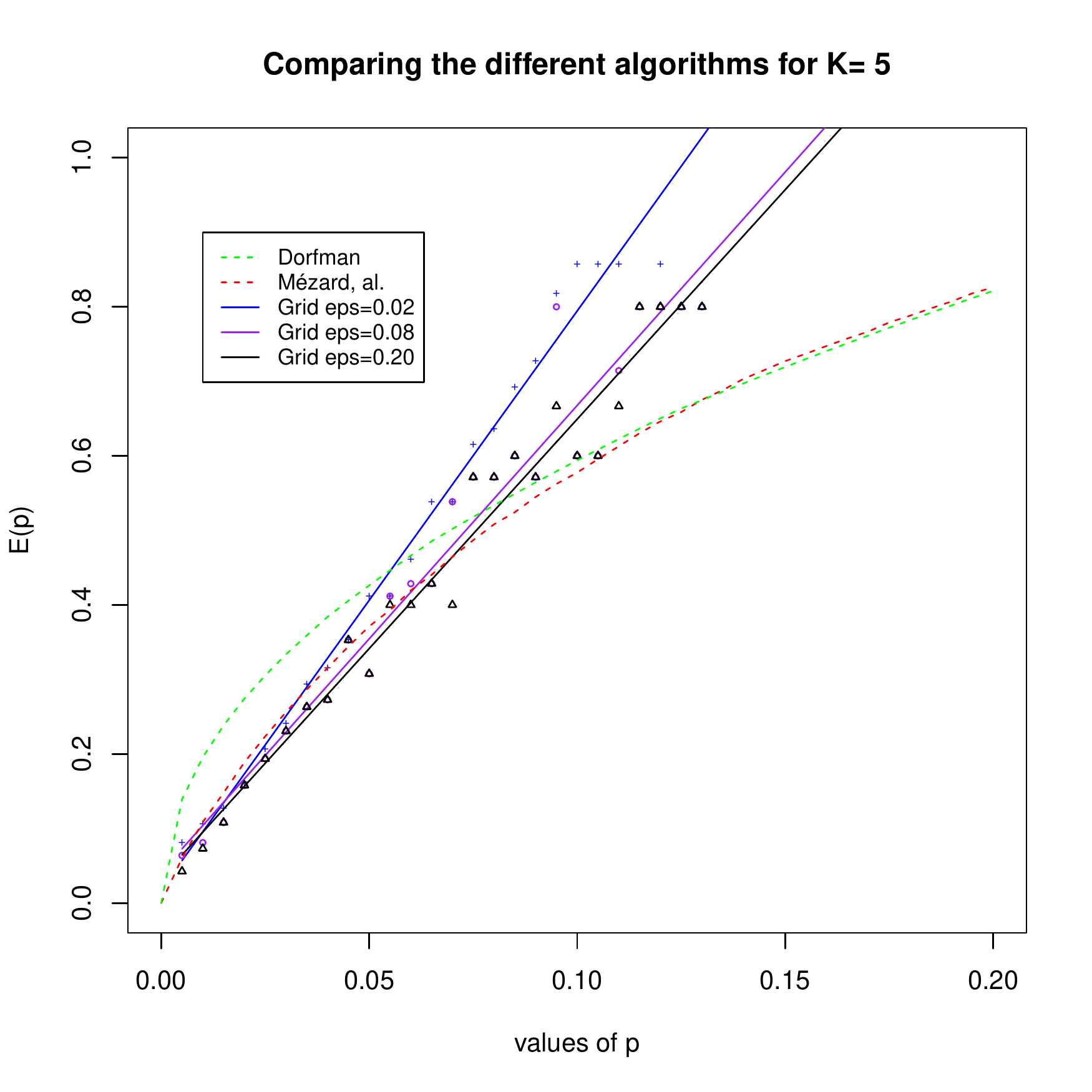}
  \includegraphics[scale=0.45]{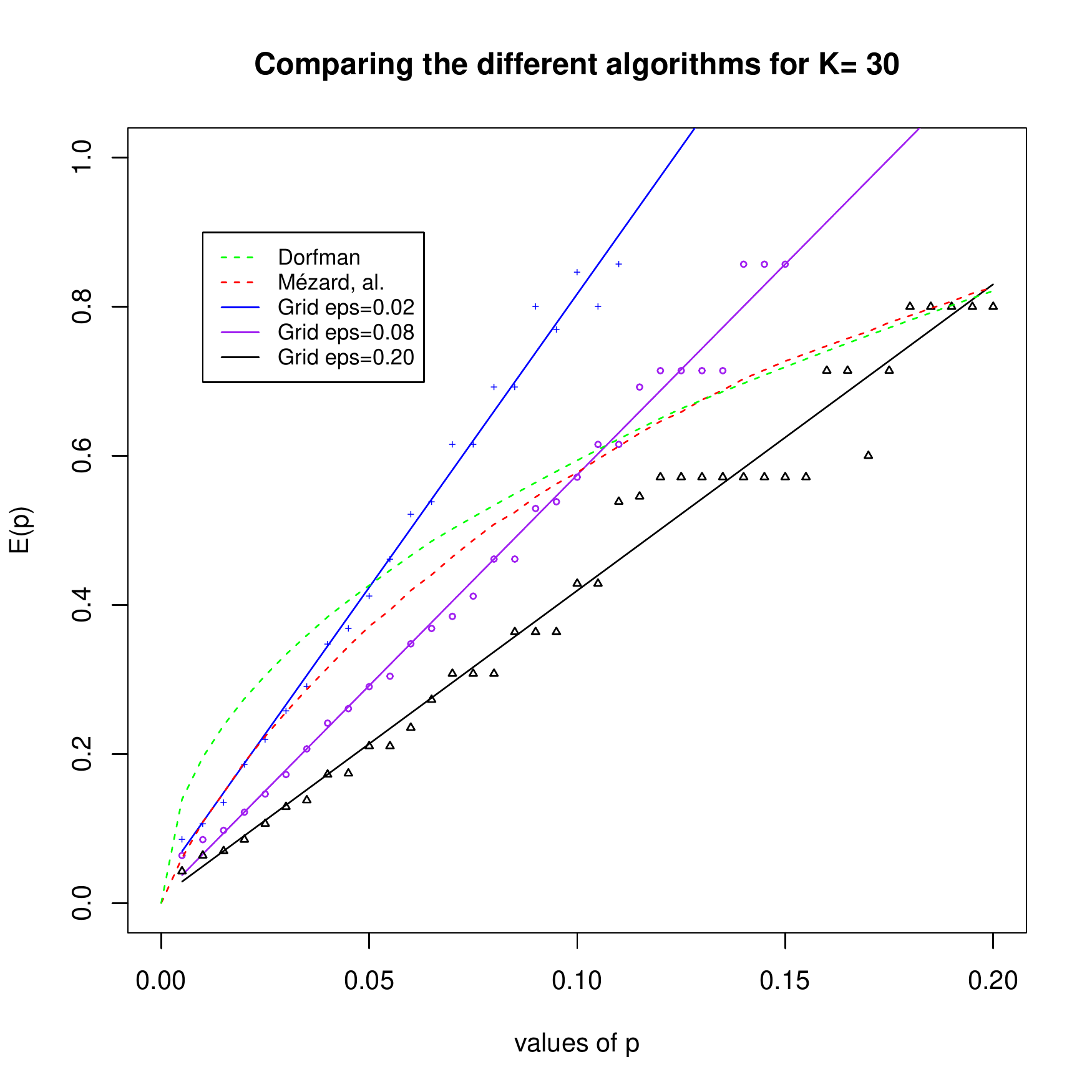}
  \caption{The effect of the value of \(K\) on the slope of \(E(p)\) for different values of \(\epsilon\)}
  \label{fig:Compar_E_Dorf_Mez}
\end{figure}
Figure~\ref{fig:Compar_E_Dorf_Mez} shows for the two different values \(K=5\) and \(K=30\), the behavior of our Efficiency curve versus Dorfman theoretical efficiency and a simulated Mézard, al. efficiency. We drew the resulting points of \(E(p)\) in three different colors (blue,purple,black) that correspond to the choices of \(\epsilon\) given by \((0.02,0.08,0.2)\). For each of these ensembles of points, we also drew a simple regression line. It has to be seen that the dependence of \(E\) on \(p\) is clearly linear and that the slope of the line is dependent on the choice of the parameter \(\epsilon\), as expected. It is also interesting to see that the effect of \(\epsilon\) is less clear when \(K\) is small since the number of false positives is higher and then is more limitent than when \(K\) is large.

\subsection{Showing the choices of \texorpdfstring{\(L\)}{L} and \texorpdfstring{\(n\)}{n}}
The next three plots (in Figure~\ref{fig:Ldependence}) show the choices of the parameter \(L\) during the optimization of \(E\) for fixed values of \(p\) and \(K\). As before we let \(\epsilon\) vary in between the different plots. Besides been a little unstable in the choice of \(L\) along \(p\), we observe that the optimal \(L\) remains bounded (hence is fairly independent from the choice of \(p\)) and does change with a change of \(\epsilon\) as suggested by the calculations in Section~\ref{sec:Lmax}.
\begin{figure}[ht]
  \includegraphics[scale=0.3]{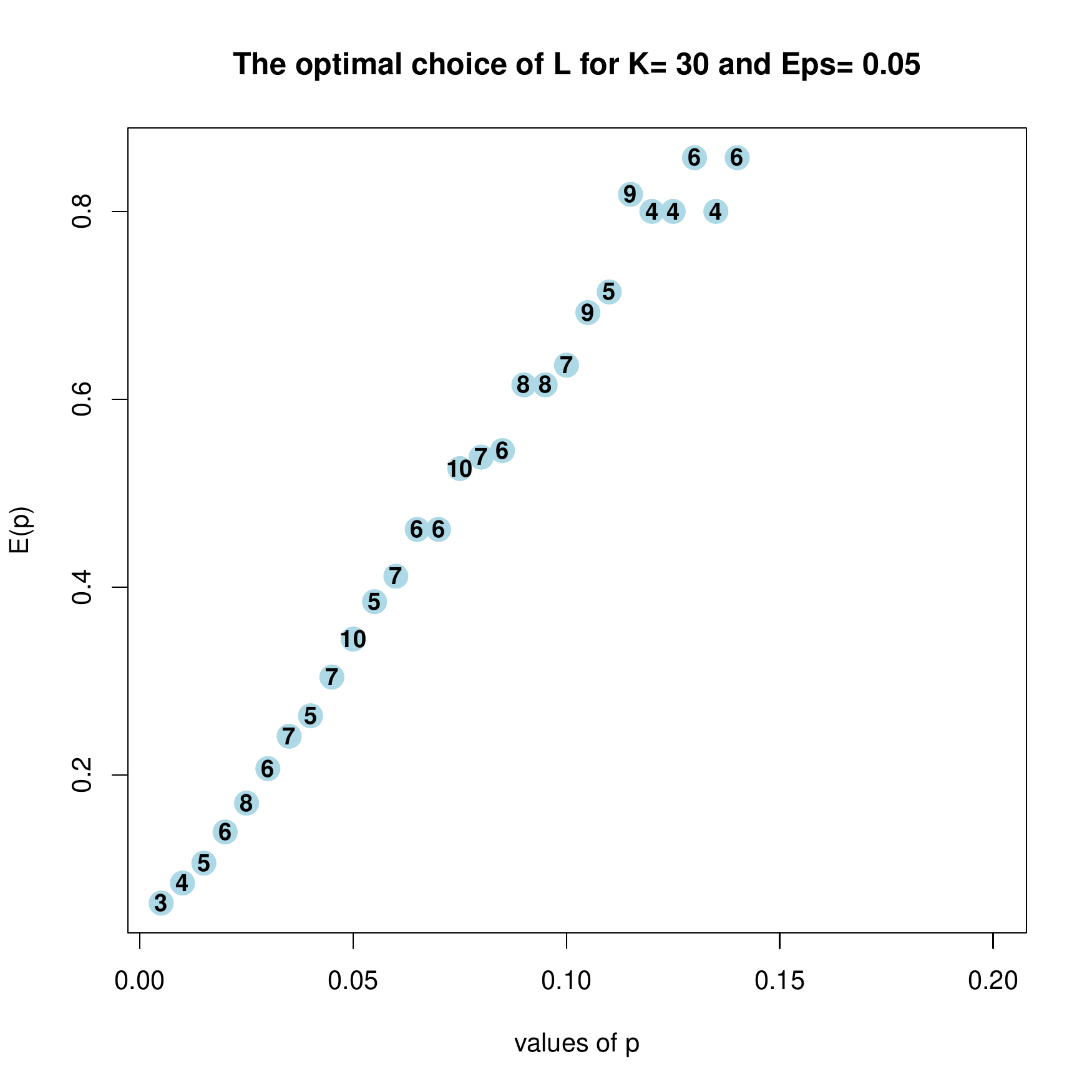}
  \includegraphics[scale=0.3]{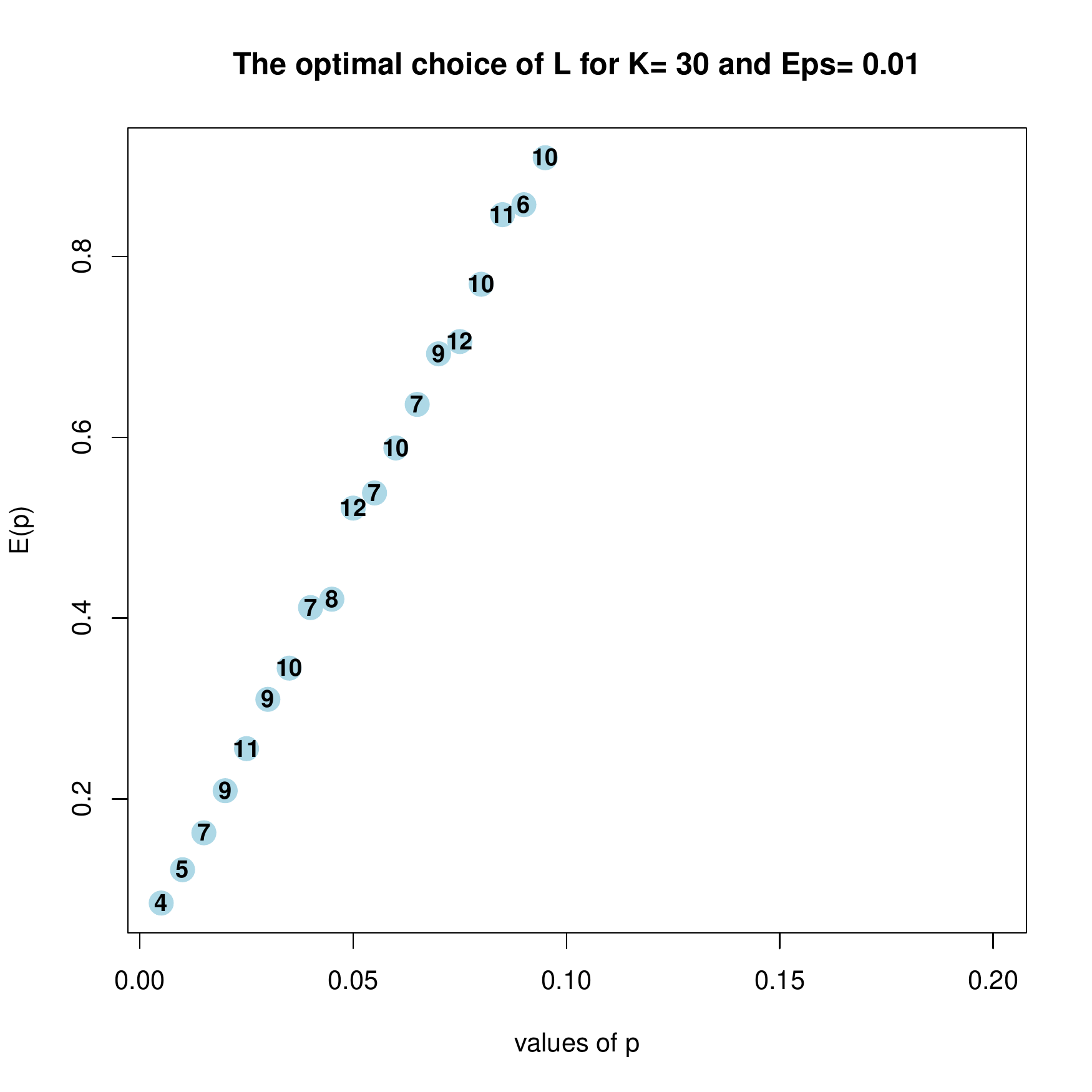}
  \includegraphics[scale=0.3]{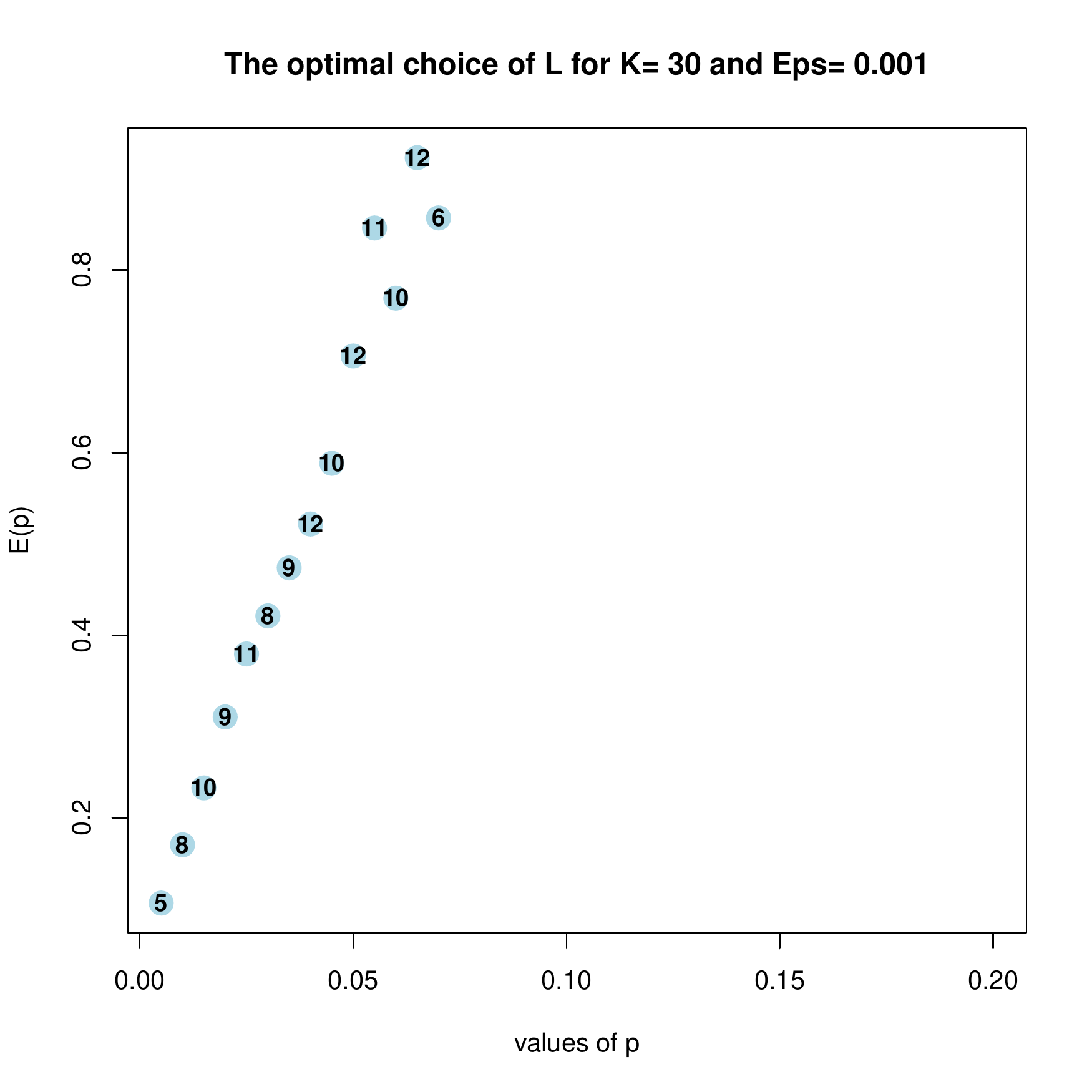}
  \caption{Efficiency with respect to \(p\) and the associated optimal choice of the parameter \(L\). The number displayed inside the blue bubbles correspond to the chosen value of \(L\) in the optimization.}
  \label{fig:Ldependence}
\end{figure}

The last three plots (in Figure~\ref{fig:ndependence}) are the analogs of the previous plots with the slight difference that the displayed numbers correspond to the chosen values of \(n\). In this case, we observe that, now, \(\epsilon\) has no more effect on the chosen values of \(n\). As expected, \(n\) depends on \(p\) in a decreasing manner and validate the calculation of Section~\ref{sec:Lmax}. Indeed, we showed that the best choices of \(n\) allow to keep the product \(np\) more or less constant which is the case in the simulations.
\begin{figure}[ht]
  \includegraphics[scale=0.3]{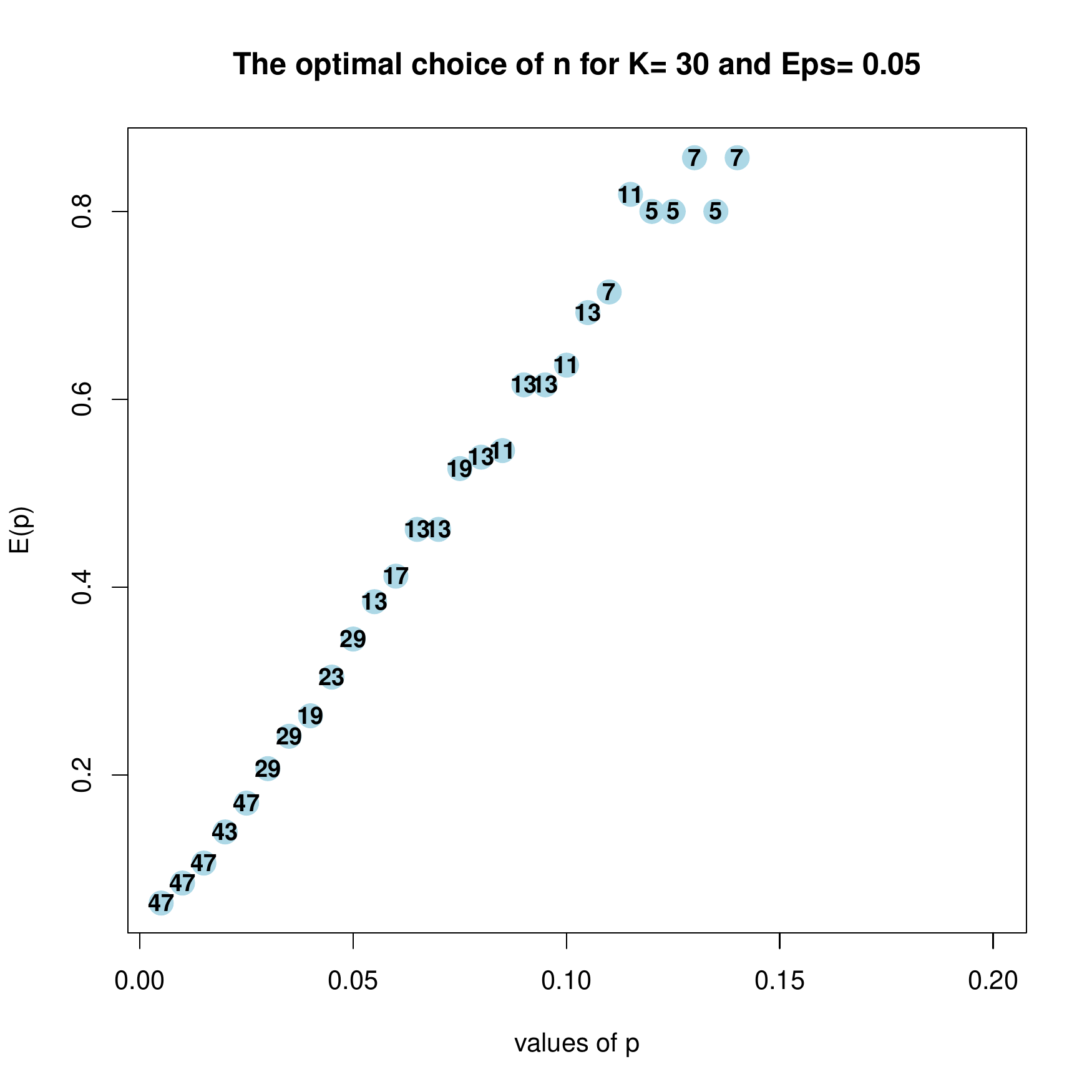}
  \includegraphics[scale=0.3]{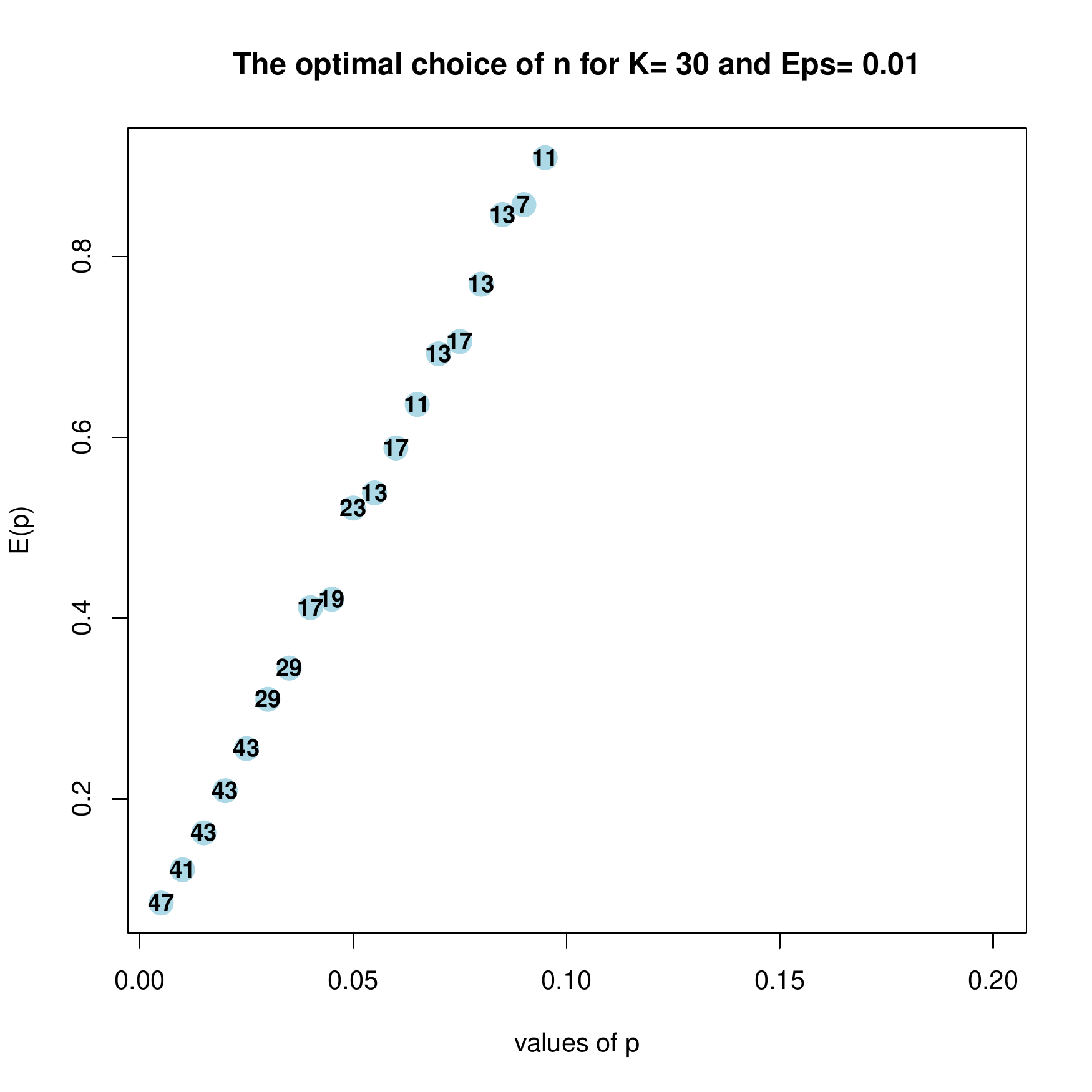}
  \includegraphics[scale=0.3]{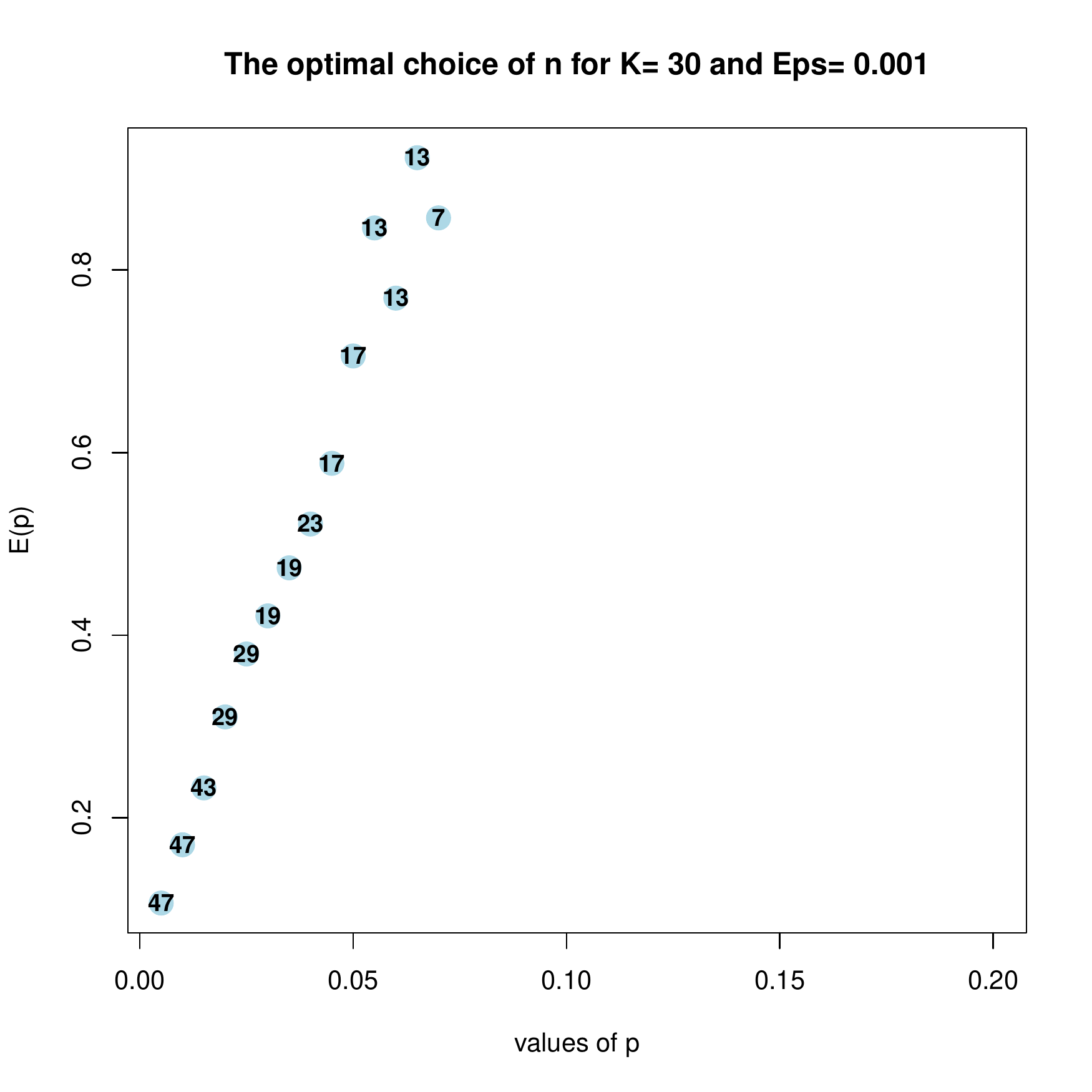}
  \caption{Efficiency with respect to \(p\) and the associated optimal choice of the parameter \(n\). The number displayed inside the blue bubbles correspond to the chosen value of \(n\) in the optimization.}
  \label{fig:ndependence}
\end{figure}

\section{Application to the COVID-19 pandemic and open questions}
\label{sec:covid}

The application of the present algorithm to PCR testing in the context of the COVID-19 pandemic requires some adaptation and presents a couple of challenges. It should first be noted that the simplifications we made in our modeling were quite important. We thus begin by discussing in more details the discrepancies between the real-world problem and our idealized model.

\paragraph{Finite size of samples.}
In COVID-19 pool testing, the items that are tested are samples taken from subjects, via nasal swab, saliva sample or other method. If there seems to be usually enough matter to split the sample into several tests, it will not be possible to make an arbitrary large number of tests on each sample. Therefore, optimal computations made in Section~\ref{sec:petitL} might be somewhat more relevant. Additionally, it is worth noting that combining several samples have the effect of creating a composition with the average viral load rather than the maximal, although the fact that this viral load is spread over several orders of magnitudes negates partially this problem as discussed in the introduction.

\paragraph{Noisiness in the measure.}
We chose to represent a lack of accuracy of the measure by replacing the continuously distributed viral load by a discrete distribution, as if contaminated tests could be arranged into ``classes'' of similarly measured viral load. In practice, the quantity returned is a real number, which is a measure on which a Gaussian noise is expected to be applied. More precisely, several steps of the process have the effect of creating uncertainty on the measure, and the variance that has to be expected from a pooled test might be rather large. There is first the collection of the portions of samples used to make the group testing, whose volume and associated viral load may vary with regards to the attended equal contributions. Next, the reverse transcriptase step that converts RNA samples of the virus into DNA might introduce additional noise depending on its rate of conversion. Finally, the PCR measurement itself products a noisy value, partially corrected by the fact that classically, two different DNA sequences for the virus are measured separately. It would then require some adaptation of our algorithm to adapt the ``finite precision algorithm'' to noisy Gaussian measure of the viral load in each pool, although classical likelihood ratio estimates might be successfully used here.

\paragraph{Distribution of the viral load among contaminated.}
Concerning the distribution of the viral load of the samples, we made here the choice of uniform distribution, which is the most favorable for this type of algorithm. Although this is quite far from what is effectively observed \cite{Jones2020,Cabrera2020}, the viral load observed among large groups of people is usually successfully approached by a mixture of two to three Gaussian variables with standard deviations between 3 and 6, spanning over the interval $[20,40]$ (c.f. \cite[Appendix B]{BMR}). The viral loads may be considered sufficiently spread over the interval so that the algorithm discussed above might still be relevant. The nice and explicit calculations on the choices of the parameters would need to be adapted. They might also need to be tuned from day to day, depending of the expected prevalence of samples on a given day, which might vary over time.

\paragraph{Precision limits of the PCR.}
As it was first noted in \cite{Furon2018}, theoretical aspects of pool testing usually assume that the quality of the test does not depend on the size of the pool. However, this is rarely the case in real-world applications, and it is indeed not the case for the present application. In particular, we show in \cite{BMR} that pooling has an impact on the measurement of samples with small viral loads, due to a dilution effect. In our toy model, this could be taken into account by specifying that in pools of size $n$, items with load smaller than $c_n$ are treated as non-defective items, for some increasing function $c_n$. This has the effect of decreasing the value of the optimal choice for $n$, in order to detect enough contaminated individuals with small viral load. However, the computations in this case being very dependent on the function $c_n$, we choose not to include it in the present work.

\paragraph{Random outcome of a pooled test.}
Finally, we assumed that each test of the pools is performed with no other error than the one inherent to the PCR itself. It is probably an oversimplification in this case, as pool testing implies important manipulations of the samples, with possible additional errors involved. For example, forgetting to collect one individual in a pool, contaminating a pool with a sample that should not belong to it, etc. Those human errors would create noise on the measures of the pools and so would deteriorate the information given to our algorithm. Therefore, it would need to be adapted to this situation, in order not to characterize as negative a sample measured at high values in all but one pool, for example. There is also the issue of systemic bias in PCR, as most machines only allow the measure of the relative viral load rather than an absolute value. As our algorithm only consider relative loads, this is not generally an issue for our method, when the algorithm is performed on a single machine.

\paragraph{Potential extensions.}
The algorithm presented here has the advantage of being simple to implement and easy to solve, even by hand. However, more precise algorithms might be employed with the help of automation for the creation of samples and measure of results. It would therefore be interesting to create more precise algorithms for PCR-type pool testing.
A relevant generalization could be to collect and use additional information on the subjects. We can imagine that, throughout interviews, some individuals might be identified as being more likely to be contaminated, while others could be simply routinely tested. It is probably more efficient to tests the former in smaller pools and the latter in larger ones.

An other project of interest might be the deconvolution of pools created by Dorfman's algorithm. More precisely, in the algorithm, instead of testing individually every member of a group detected as contaminated, it might be interesting to test several samples from different positive pools in a two-stage deconvolution that might represent a further economy of tests on Dorfman's algorithm. Choosing the right pools to pair together, as well as the number of positive pools to be de-convoluted at the same time might be an interesting expansion on the current work.

\subsubsection*{Acknowledgements.}
We wish to thank members of MODCOV-19 platform of the CNRS for support, in particular Françoise Praz and Florence Débarre who gave us numerous helpful comments, in particular on the biological aspects of PCR. We also thank the members of the Groupool initiative for helpful discussions at the earlier stages of this project on group testing.


\newcommand{\etalchar}[1]{$^{#1}$}

\end{document}